\renewcommand{\ALG@beginalgorithmic}{\small}
\newtheorem{theorem}{Theorem}[section]
\newtheorem{definition}{Definition}[section]
\begin{document}
\sloppy
%
\title{Detecting Communities under Differential Privacy}

\author{\IEEEauthorblockN{Hiep H. Nguyen, Abdessamad Imine, and Micha\"{e}l Rusinowitch}
\IEEEauthorblockA{LORIA/INRIA Nancy-Grand Est, France}
Email: \{huu-hiep.nguyen,michael.rusinowitch\}@inria.fr, abdessamad.imine@loria.fr
}


%


\maketitle

\begin{abstract}
Complex networks usually expose community structure with groups of nodes sharing many links with the other nodes in the same group and relatively few with the nodes of the rest. This feature captures valuable information about the organization and even the evolution of the network. Over the last decade, a great number of algorithms for community detection have been proposed to deal with the increasingly complex networks. However, the problem of doing this in a private manner is rarely considered. 

In this paper, we solve this problem under differential privacy, a prominent privacy concept for releasing private data. We analyze the major challenges behind the problem and propose several schemes to tackle them from two perspectives: input perturbation and algorithm perturbation. We choose Louvain method as the back-end community detection for input perturbation schemes and propose the method LouvainDP which runs Louvain algorithm on a noisy super-graph. For algorithm perturbation, we design ModDivisive using exponential mechanism with the modularity as the score. We have thoroughly evaluated our techniques on real graphs of different sizes and verified their outperformance over the state-of-the-art.
\end{abstract}



%
\IEEEpeerreviewmaketitle

\section{Introduction}

\label{sec:introduction}

Graphs represent a rich class of data observed in daily life where entities are described by nodes and their connections are characterized by edges. Apart from microscopic (node level) and macroscopic (graph level) configurations, many complex networks 
display a \textit{mesoscopic} structure, i.e. they appear as a combination of components fairly independent of each other. These components are called communities, modules or clusters and the problem of how to reveal them plays a significant role in understanding the organization and function of complex networks. Over the last decade, a great number of algorithms for community detection (CD) have been proposed to address the problem in a variety of settings, such as undirected/directed, unweighted/weighted networks and non-overlapping/overlapping communities (for a comprehensive survey, see \cite{fortunato2010community}). 

These approaches, however, are adopted in a non-private manner, i.e. a data collector (such as Facebook) knows all the contributing users and their relationships before running CD algorithms. The output of such a CD, in the simplest form, is a clustering of nodes. Even in this case, i.e. only a node clustering (not the whole graph) is revealed, contributing users\ privacy may still be put at risk. 

In this paper, we address the problem of CD from the perspective of differential privacy \cite{dwork2006calibrating}. This privacy model offers a formal definition of privacy with a lot of interesting properties: no computational/informational assumptions about attackers, data type-agnosticity, composability and so on \cite{mcsherry2009privacy}. By differential privacy, we want to ensure the existence of connections between users to be hidden in the output clustering while keeping the low distortion of clusters compared to the ones generated by the corresponding non-private algorithms. 

As far as we know, the problem is quite new and only mentioned in the recent work \cite{mulle2015privacy} where M\"{u}lle et al. use a sampling technique to perturb the input graph so that it satisfies differential privacy before running conventional CD algorithms. This technique (we call it \textit{EdgeFlip} afterwards) is classified as \textit{input perturbation} in differential privacy literature (the other two categories are \textit{algorithm perturbation} and \textit{output perturbation}). Similarly, \textit{TmF} approach \cite{nguyen2015differentially} can apply to the true graph to get noisy output graphs as in the work of M\"{u}lle et al. Earlier, 1k-Series \cite{wang2013preserving}, \textit{Density Explore Reconstruct} (DER) \cite{chen2014correlated} and HRG-MCMC \cite{xiao2014differentially} are the best known methods for graph structure release under differential privacy. These methods can be followed by any exact CD algorithm to get a noisy clustering satisfying differential privacy. We choose Louvain method \cite{blondel2008fast} as such a CD algorithm. However, as we will see in the experiments, the output clusterings by the aforementioned methods have very low modularity. This fact necessitates new methods for CD problem under differential privacy.

Our main contributions are the new schemes \textit{LouvainDP} (input perturbation) and \textit{ModDivisive} (algorithm perturbation) which perform much better than the state-of-the-art. LouvainDP is a high-pass filtering method that randomly groups nodes into supernodes of equal size to build a weighted supergraph. LouvainDP is guaranteed to run in linear time. ModDivisive is a top-down approach which privately divides the node set into the k-ary tree guided by the modularity score at each level. The main technique used in ModDivisive is the Markov Chain Monte Carlo (MCMC) to realize the exponential mechanism \cite{mcsherry2007mechanism}. We show that ModDivisive's runtime is linear in the number of nodes, the height of the binary tree and the burn-in factor of MCMC. The linear complexity enables us to examine million-scale graphs in a few minutes. The experiments show the high modularity and low distortion of the output clusters by LouvainDP and ModDivisive.

Our contributions are summarized as follows:
\begin{itemize}
\item We analyze the major challenges of community detection under differential privacy. We explain why techniques borrowed from k-Means fail and how the difficulty of $\epsilon$-DP recommender systems justifies a relaxation of $\epsilon$.
\item We design an input perturbation scheme LouvainDP that runs in linear time using the high-pass filtering technique from \cite{cormode2012differentially} and Louvain method \cite{blondel2008fast}.
\item We propose an algorithm perturbation scheme ModDivisive as a divisive approach by using the modularity-based score function in the exponential mechanism. We prove that modularity has small global sensitivity and ModDivisive also runs in linear time.
\item We conduct a thorough evaluation on real graphs of different sizes and show the outperformance of LouvainDP and ModDivisive over the state-of-the-art.
\end{itemize}

The paper is organized as follows. We review the related work for community detection algorithms and graph release via differential privacy in Section \ref{sec:related}. Section \ref{sec:pre} briefly introduces the key concepts of differential privacy, the popular Louvain method and the major challenges of $\epsilon$-DP community detection. Section \ref{sec:input} focuses on the category of input perturbation in which we propose LouvainDP and review several recent input perturbation schemes. We describe ModDivisive in Section \ref{sec:algo}. We compare all the presented schemes on real graphs in Section \ref{sec:eval}. Finally, we present our conclusions and suggest future work in Section \ref{sec:conclusion}.
 
 Table \ref{tab:notation} summarizes the key notations used in this paper.
 
 \begin{table}[htb]
 \small
 \centering
 \caption{List of notations} \label{tab:notation}
 \begin{tabular}{c|l}
 \textbf{Symbol} &\textbf{Definition} \\
 \hline
 $G=(V,E_G)$ & true graph with $n=|V|$ and $m=|E_G|$\\
 \hline
 $G'=(V,E_{G'})$ & neighboring graph of $G$\\
 \hline
 $\tilde{G}=(V,E_{\tilde{G}})$ & sample noisy output graph\\
 \hline
 $G_1=(V_1,E_1)$ & supergraph generated by LouvainDP\\ 
 \hline
 $k$ & fan-out of the tree in ModDivisive\\
 \hline
 $K$ & burn-in factor in MCMC-based algorithms \\
 \hline
 $\lambda$ & common ratio to distribute the privacy budget\\
 \hline
  $C$ & a clustering of nodes in $G$  \\
 \hline
  $Q(G, C)$ & modularity of the clustering $C$ on graph $G$  \\
 \end{tabular}
 \end{table}
 
\section{Related Work}
\label{sec:related}

\subsection{Community Detection in Graphs}
There is a vast literature on community detection in graphs. For a recent comprehensive survey, we refer to \cite{fortunato2010community}. In this section, we discuss several classes of techniques.

Newman and Girvan \cite{newman2004finding} propose \textit{modularity} as a quality of network clustering. It is based on the idea that a random graph is not expected to have a modular structure, so the possible existence of clusters is revealed by the comparison between the actual density of edges in a subgraph and the density one would expect to have in the subgraph if the nodes of the graph were connected randomly (the null model). The modularity $Q$ is defined as

\begin{equation}
\label{eqn:modularity}
Q = \sum_{c=1}^{n_c}\left[\frac{l_c}{m} - \left(\frac{d_c}{2m} \right) ^2 \right]
\end{equation}
where $n_c$ is the number of clusters, $l_c$ is the total number of edges joining nodes in community $c$ and $d_c$ is the sum of the degrees of the nodes of $c$. 

Many methods for optimizing the modularity have been proposed over the last ten years, such as agglomerative greedy \cite{clauset2004finding},  simulated annealing \cite{medus2005detection}, random walks \cite{pons2005computing}, statistical mechanics \cite{reichardt2006statistical}, label propagation \cite{raghavan2007near} or InfoMap \cite{rosvall2008maps}, just to name a few. The recent multilevel approach, also called \textit{Louvain method}, by Blondel et al. \cite{blondel2008fast} is a top performance scheme. It scales very well to graphs with hundreds of millions of nodes. This is the chosen method for the input perturbation schemes considered in this paper (see Sections \ref{subsec:louvain} and \ref{sec:input} for more detail). By maximizing the modularity, Louvain method is based on \textit{edge counting} metrics, so it fits well with the concept of \textit{edge differential privacy} (Section \ref{sec:diffpriv}). One of the most recent methods \textit{SCD} \cite{prat2014high} is not chosen because it is about maximizing Weighted Community Clustering (WCC) instead of the modularity. WCC is based on \textit{triangle counting} which has high global sensitivity (up to $O(n)$) \cite{zhang2015private}. Moreover, SCD pre-processes the graph by removing all edges that do not close any triangle. This means all 1-degree nodes are excluded and form singleton clusters. The number of output clusters is empirically up to $O(n)$.

\subsection{Graph Release via Differential Privacy}
In principle, after releasing a graph satisfying $\epsilon$-DP, we can do any mining operations on it, including community detection. The research community, therefore, expresses a strong interest in the problem of graph release via differential privacy. Differentially private algorithms relate the amount of noise to the computation sensitivity. Lower sensitivity implies smaller added noise. Because the edges in simple undirected graphs are usually assumed to be independent, the standard Laplace mechanism \cite{dwork2006calibrating} is applicable (e.g. adding Laplace noise to each cell of the adjacency matrix). However, this approach severely deteriorates the graph structure. 

The state-of-the-art \cite{chen2014correlated,xiao2014differentially} try to reduce the sensitivity of the graph in different ways. \textit{Density Explore Reconstruct} (DER)\cite{chen2014correlated} employs a data-dependent quadtree to summarize the adjacency matrix into a counting tree and then reconstructs noisy sample graphs. DER is an instance of input perturbation. Xiao et al. \cite{xiao2014differentially} propose to use \textit{Hierarchical Random Graph} (HRG) \cite{clauset2008hierarchical} to encode graph structural information in terms of edge probabilities. Their scheme HRG-MCMC is argued to be able to sample good HRG models which reflect the community structure in the original graph. HRG-MCMC is classified as an algorithm perturbation scheme. A common disadvantage of the state-of-the-art DER and HRG-MCMC is the scalability issue. Both of them incur quadratic complexity $O(n^2)$, limiting themselves to medium-sized graphs.

Recently, Nguyen et al. \cite{nguyen2015differentially} proposed \textit{TmF} that utilizes a filtering technique to keep the runtime linear in the number of edges. TmF also proves the upper bound $O(\ln n)$ for the privacy budget $\epsilon$. At the same time, M\"{u}lle et al. \cite{mulle2015privacy} devised the scheme \textit{EdgeFlip} using an edge flipping technique. However, EdgeFlip costs $O(n^2)$ and is runnable only on graphs of tens of thousands of nodes.

A recent paper by Campan et al. \cite{campan2015preserving} studies whether anonymized social networks preserve existing communities from the original social networks. The considered anonymization methods are SaNGreeA \cite{campan2008clustering} and k-degree method \cite{liu2008towards}. Both of them use k-anonymity rather than differential privacy which is the focus of this paper.

\section{Preliminaries}
\label{sec:pre}
In this section, we review key concepts and mechanisms of differential privacy.
\subsection{Differential Privacy}
\label{sec:diffpriv}
Essentially, $\epsilon$-differential privacy ($\epsilon$-DP) \cite{dwork2006calibrating} is proposed to quantify the notion of \textit{indistinguishability} of neighboring databases. In the context of graph release, two graphs $G_1=(V_1,E_1)$ and $G_2=(V_2,E_2)$ are neighbors if $V_1=V_2$, $E_1 \subset E_2$ and $|E_2| = |E_1| + 1$. Note that this notion of neighborhood is called \textit{edge differential privacy} in contrast with the notion of \textit{node differential privacy} which allows the addition of one node and its adjacent edges. Node differential privacy has much higher sensitivity and is much harder to analyze \cite{kasiviswanathan2013analyzing,blocki2013differentially}. Our work follows the common use of edge differential privacy. The formal definition of $\epsilon$-DP for graph data is as follows.

\begin{definition}
\label{def:e-dp}
A mechanism $\mathcal{A}$ is $\epsilon$-differentially private if for any two neighboring graphs $G_1$ and $G_2$, and for any output $O \in Range(\mathcal{A})$,
\begin{equation}
Pr[\mathcal{A}(G_1) \in O] \leq e^{\epsilon} Pr[\mathcal{A}(G_2) \in O] \nonumber
\end{equation} 
\end{definition}

Laplace mechanism \cite{dwork2006calibrating} and Exponential mechanism \cite{mcsherry2009privacy} are two standard techniques in differential privacy. The latter is a generalization of the former. Laplace mechanism is based on the concept of \textit{global sensitivity} of a function $f$ which is defined as $\Delta f = \max_{G_1,G_2} ||f(G_1)-f(G_2)||_1$ where the maximum is taken over all pairs of neighboring $G_1,G_2$. Given a function $f$ and a privacy budget $\epsilon$, the noise is drawn from a Laplace distribution $Lap(\lambda): p(x|\lambda) = \frac{1}{2\lambda} e^{-|x|/\lambda}$ where $\lambda=\Delta f/\epsilon$.

\begin{theorem} (Laplace mechanism \cite{dwork2006calibrating})
\label{def:lap-mech}
For any function $f: G \rightarrow \mathbb{R}^d$, the mechanism $\mathcal{A}$
\begin{equation}
\mathcal{A}(G) = f(G) + \langle Lap_1(\frac{\Delta f}{\epsilon}),...,Lap_d(\frac{\Delta f}{\epsilon})\rangle
\end{equation}
satisfies $\epsilon$-differential privacy, where $Lap_i(\frac{\Delta f}{\epsilon})$ are i.i.d Laplace variables with scale parameter $\frac{\Delta f}{\epsilon}$.	\hfill $\Box$
\end{theorem}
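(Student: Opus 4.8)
The plan is to establish the inequality in Definition~\ref{def:e-dp} directly, by comparing the probability density functions of $\mathcal{A}(G_1)$ and $\mathcal{A}(G_2)$ at an arbitrary point and then integrating over an arbitrary measurable set $O \subseteq \mathbb{R}^d$. First I would fix two neighboring graphs $G_1, G_2$ and set $\lambda = \Delta f/\epsilon$. Since the $d$ Laplace variables are independent, the density of the random vector $\mathcal{A}(G_1)$ evaluated at a point $z=(z_1,\dots,z_d)$ factorizes as $p_{G_1}(z) = \prod_{i=1}^d \frac{1}{2\lambda}\exp\!\left(-\frac{|f(G_1)_i - z_i|}{\lambda}\right)$, and analogously for $G_2$.

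Next I would bound the ratio pointwise. Forming $p_{G_1}(z)/p_{G_2}(z)$, the normalizing constants cancel and the exponent becomes $\frac{1}{\lambda}\sum_{i=1}^d \left(|f(G_2)_i - z_i| - |f(G_1)_i - z_i|\right)$. Applying the triangle inequality coordinatewise, $|f(G_2)_i - z_i| - |f(G_1)_i - z_i| \le |f(G_1)_i - f(G_2)_i|$, so the ratio is at most $\exp\!\left(\frac{1}{\lambda}\sum_{i=1}^d |f(G_1)_i - f(G_2)_i|\right) = \exp\!\left(\frac{\|f(G_1)-f(G_2)\|_1}{\lambda}\right)$. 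By the definition of global sensitivity, $\|f(G_1)-f(G_2)\|_1 \le \Delta f$, and since $\lambda = \Delta f/\epsilon$, this quantity is at most $e^{\epsilon}$.

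Finally I would lift the pointwise density bound to the set-level statement: for any measurable $O$, $Pr[\mathcal{A}(G_1)\in O] = \int_O p_{G_1}(z)\,dz \le e^{\epsilon}\int_O p_{G_2}(z)\,dz = e^{\epsilon}\, Pr[\mathcal{A}(G_2)\in O]$, which is exactly $\epsilon$-differential privacy. The argument has no real obstacle; its crux is simply the coordinatewise triangle inequality, which converts the displacement of the Laplace centers into the $\ell_1$ distance $\|f(G_1)-f(G_2)\|_1$ that is in turn controlled by $\Delta f$. The only point deserving a separate remark is the degenerate case $\Delta f = 0$, where $f$ is constant across every neighboring pair, so $\mathcal{A}(G_1)$ and $\mathcal{A}(G_2)$ are identically distributed and the claim is immediate; I would dispose of that case first and then run the computation above for $\Delta f > 0$.
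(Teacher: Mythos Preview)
Your argument is correct and is the standard textbook proof of the Laplace mechanism: compare densities pointwise, apply the triangle inequality coordinatewise to convert the shift in centers into $\|f(G_1)-f(G_2)\|_1$, bound that by $\Delta f$, and integrate. The paper itself does not supply a proof of this theorem; it is stated as a cited result from \cite{dwork2006calibrating} and closed with a $\Box$, so there is no in-paper proof to compare against.
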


\textit{Geometric mechanism} \cite{ghosh2012universally} is a discrete variant of Laplace mechanism with integral output range $\mathbb{Z}$ and random noise $\Delta$ generated from a  two-sided geometric distribution $Geom(\alpha): Pr[\Delta = \delta| \alpha] = \frac{1-\alpha}{1+\alpha} \alpha^{|\delta|}$. To satisfy $\epsilon$-DP, we set $\alpha=\exp(-\epsilon)$. We use geometric mechanism in our LouvainDP scheme.

For non-numeric data, the exponential mechanism is a better choice \cite{mcsherry2007mechanism}. Its main idea is based on sampling an output $O$ from the output space $\mathcal{O}$ using a score function $u$. This function assigns exponentially higher probabilities to outputs of higher scores. Let the global sensitivity of $u$ be $\Delta u = \max_{O,G_1,G_2}|u(G_1,O) - u(G_2,O)|$.

\begin{theorem} (Exponential mechanism \cite{mcsherry2007mechanism})
\label{def:exp-mech}
Given a score function $u:(G\times \mathcal{O}) \rightarrow \mathbb{R}$ for a graph $G$, the mechanism $\mathcal{A}$ that samples an output $O$ with probability proportional to $\exp(\frac{\epsilon .u(G,O)}{2\Delta u})$ satisfies $\epsilon$-differential privacy.		\hfill $\Box$
\end{theorem}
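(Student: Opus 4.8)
The plan is to bound, for any two neighboring graphs $G_1$ and $G_2$ and any single output, the ratio of the probabilities that $\mathcal{A}$ returns that output, and then lift the bound from singletons to an arbitrary event $O \in Range(\mathcal{A})$ by summation (or integration). Writing $Z(G) = \sum_{O' \in \mathcal{O}} \exp\!\big(\frac{\epsilon\, u(G,O')}{2\Delta u}\big)$ for the normalizing constant, the mechanism assigns $Pr[\mathcal{A}(G) = o] = \exp\!\big(\frac{\epsilon\, u(G,o)}{2\Delta u}\big)/Z(G)$, so for a fixed output $o$ the ratio factors as
\begin{equation}
\frac{Pr[\mathcal{A}(G_1) = o]}{Pr[\mathcal{A}(G_2) = o]} = \exp\!\left(\frac{\epsilon\,(u(G_1,o) - u(G_2,o))}{2\Delta u}\right) \cdot \frac{Z(G_2)}{Z(G_1)}. \nonumber
\end{equation}

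The first step is to control the exponential factor: by definition of global sensitivity, $|u(G_1,o) - u(G_2,o)| \le \Delta u$, so this factor is at most $\exp(\epsilon/2)$. The second step is to control the ratio of normalizers in the same way: for every $O'$ we have $u(G_2,O') \le u(G_1,O') + \Delta u$, hence $\exp\!\big(\frac{\epsilon\, u(G_2,O')}{2\Delta u}\big) \le \exp(\epsilon/2)\,\exp\!\big(\frac{\epsilon\, u(G_1,O')}{2\Delta u}\big)$; summing over $O'$ gives $Z(G_2) \le \exp(\epsilon/2)\,Z(G_1)$, i.e. $Z(G_2)/Z(G_1) \le \exp(\epsilon/2)$. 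Multiplying the two bounds yields $Pr[\mathcal{A}(G_1)=o] \le e^{\epsilon}\,Pr[\mathcal{A}(G_2)=o]$ for every single output $o$.

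Finally I would extend to an arbitrary event $O \in Range(\mathcal{A})$ by writing $Pr[\mathcal{A}(G_1) \in O] = \sum_{o \in O} Pr[\mathcal{A}(G_1) = o] \le \sum_{o \in O} e^{\epsilon}\,Pr[\mathcal{A}(G_2) = o] = e^{\epsilon}\,Pr[\mathcal{A}(G_2) \in O]$, which is exactly Definition~\ref{def:e-dp}; for the discrete output spaces relevant to ModDivisive this is a plain finite sum, and in a continuous setting the same argument goes through with $Z(G)$ replaced by $\int_{\mathcal{O}} \exp(\cdot)$. I do not expect a genuine obstacle here — the one point requiring care is that the sensitivity bound must be invoked \emph{twice}, once on the chosen output and once inside the normalizer, which is precisely why the exponent carries the factor $2\Delta u$ rather than $\Delta u$: using $\Delta u$ alone would only deliver $2\epsilon$-differential privacy.
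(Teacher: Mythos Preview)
Your argument is correct and is essentially the standard proof of the exponential mechanism due to McSherry and Talwar. Note, however, that the paper does not actually supply a proof of this theorem: it is stated as a cited background result (the $\Box$ immediately follows the statement), so there is no ``paper's own proof'' to compare against. Your write-up would serve perfectly well as the omitted justification.
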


Composability is a nice property of differential privacy which is not satisfied by other privacy models such as  k-anonymity.

\begin{theorem} (Sequential and parallel compositions \cite{mcsherry2009privacy})
\label{theorem:composition}
Let each $A_i$  provide $\epsilon_i$-differential privacy. A sequence of $A_i(D)$ over the dataset D provides $\Sigma_{i=1}^n \epsilon_i$-differential privacy.

Let each  $A_i$  provide $\epsilon_i$-differential privacy. Let $D_i$ be arbitrary disjoint subsets of the dataset D. The sequence of $A_i(D_i)$ provides $\max_{i=1}^n\epsilon_i$-differential privacy. \hfill $\Box$
\end{theorem}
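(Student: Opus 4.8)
The plan is to derive both statements straight from Definition \ref{def:e-dp}, by writing the joint output of the composed mechanism as a product over the constituent mechanisms and then bounding the ratio of densities (or probability masses) for two neighboring graphs. Throughout I would fix neighboring graphs $G_1,G_2$ and an arbitrary measurable output set $O$ in the range of the composite mechanism, prove the inequality pointwise on single outputs, and integrate (or sum, in the discrete case) to recover the set version $\Pr[\mathcal{A}(G_1)\in O]\le e^{\cdot}\Pr[\mathcal{A}(G_2)\in O]$.

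For sequential composition I would take $\mathcal{A}(G)=(A_1(G),\dots,A_n(G))$ with range $\prod_i Range(A_i)$, and for a single output tuple $o=(o_1,\dots,o_n)$ factor $\Pr[\mathcal{A}(G_1)=o]=\prod_{i=1}^n \Pr[A_i(G_1)=o_i\mid o_1,\dots,o_{i-1}]$, keeping the conditioning so that the same argument covers the adaptive case where $A_{i+1}$ is chosen after the earlier outputs are revealed. Since $G_1$ and $G_2$ are neighbors, each conditional factor satisfies the $\epsilon_i$-DP guarantee of $A_i$, so the ratio of the two products is at most $\prod_{i=1}^n e^{\epsilon_i}=e^{\sum_{i=1}^n\epsilon_i}$; integrating this pointwise bound over $O$ gives the claimed $\big(\sum_i\epsilon_i\big)$-DP.

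For parallel composition the additional ingredient is the structure of edge-neighboring graphs: if $E_{G_2}=E_{G_1}\cup\{e\}$, then because the blocks $D_i$ are disjoint the single differing edge $e$ lies in exactly one block, say $D_j$, and restricted to every other block the two inputs coincide. Writing the joint output density as a product over blocks as before, every factor with index $i\neq j$ is identical for $G_1$ and $G_2$ and cancels, leaving only the $j$-th factor, whose ratio is at most $e^{\epsilon_j}\le e^{\max_i\epsilon_i}$; integrating over $O$ finishes the proof.

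The product factorization and the integration step are routine; the only point requiring care is the adaptive version of sequential composition, where one must be precise that the $\epsilon_i$-DP hypothesis is being applied to the conditional distribution of $A_i$ given the realized history, which is legitimate because differential privacy of $A_i$ is a worst-case guarantee over its inputs (and over the fixed mechanism selected at that stage). For the uses in this paper the non-adaptive statement suffices, and there the argument reduces to the displayed product-of-ratios bound.
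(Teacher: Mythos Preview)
Your argument is the standard proof of composition and is correct. Note, however, that the paper does not actually prove this theorem: it is stated as a cited result from \cite{mcsherry2009privacy} and closed with a $\Box$, with no proof given. So there is no ``paper's own proof'' to compare against here; you have simply supplied the omitted standard argument, which is fine.
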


\subsection{Louvain Method}
\label{subsec:louvain}
Since its introduction in 2008, Louvain method \cite{blondel2008fast} becomes one of the most cited methods for the community detection task. It optimizes the modularity by a bottom-up folding process. The algorithm is divided in passes each of which is composed of two phases that are repeated iteratively. Initially, each node is assigned to a different community. So, there will be as many communities as there are nodes in the first phase. Then, for each node $i$, the method considers the gain of modularity if we move $i$ from its community to the community of a neighbor $j$ (a \textit{local change}). The node $i$ is then placed in the community for which this gain is maximum and positive (if any), otherwise it stays in its original community. This process is applied repeatedly and sequentially for all nodes until no further improvement can be achieved and the first pass is then complete.

\begin{figure}
\centering
\includegraphics[height=1.6in]{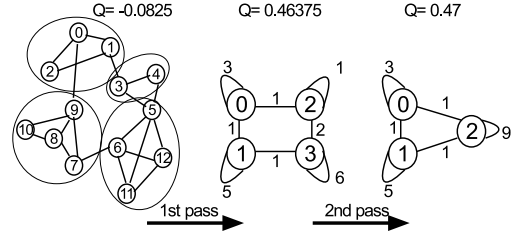}
\setlength{\abovecaptionskip}{-10pt}
\caption{Louvain method}
\vspace{-1.0em}
\label{fig:louvain}
\end{figure}

We demonstrate Louvain method in Fig.\ref{fig:louvain} by a graph of 13 nodes and 20 edges. If each node forms its own singleton community, the modularity Q will be -0.0825. In the first pass of Louvain method, each node moves to the best community selected from its neighbors' communities. We get the partition $[\{0,1,2\}, \{3,4\}, \{5,6,11,12\}, \{7,8,9,10\}]$ with modularity 0.46375. The second phase of first pass builds a weighted graph corresponding to the partition by aggregating communities. The second pass repeats the folding process on this weighted graphs to reach the final partition $[\{0,1,2\}, \{3,4,5,6,11,12\}, \{7,8,9,10\}]$ with modularity 0.47.

This simple agglomerative algorithm has several advantages as stated in \cite{blondel2008fast}. First, its steps are intuitive and easy to implement, and the outcome is unsupervised. Second, the algorithm is extremely fast, i.e. computer simulations on large modular networks suggest that its complexity is linear on typical and sparse data. This is due to the fact that the possible gains in modularity are easy to compute and the number of communities decreases drastically after just a few passes so that most of the running time is concentrated on the first iterations. Third, the multi-level nature of the method produces a hierarchical structure of communities which allows multi-resolution analysis, i.e the user can zoom in the graph to observe its structure with the desired resolution. In addition, Louvain method is runnable on weighted graphs. This fact supports naturally our scheme LouvainDP as described in Section \ref{subsec:louvaindp}.

\subsection{Challenges of Community Detection under Differential Privacy}
\label{subsec:challenge}

In this section, we explain why community detection under differential privacy is challenging. We show how techniques borrowed from related problems fail. We also advocate the choice of $\epsilon$ as a function of graph size $n$.

The problem of differentially private community detection is closely related to $\epsilon$-DP k-Means clustering and recommender systems. The $\epsilon$-DP k-Means is thoroughly discussed in \cite{su2015differentially}. However, techniques from $\epsilon$-DP k-Means are not suitable to $\epsilon$-DP community detection. First, items in k-Means are in low-dimensional spaces and the number of clusters $k$ is usually small. This  contrast to the case of community detection where nodes lie in a $n$-dimensional space and the number of communities varies from tens to tens of thousands, not to say the communities may overlap or be nested (multi-scale). Second, items in $\epsilon$-DP k-Means are normalized to $\left[-1,1\right] ^d$ while the same preprocessing seems invalid in $\epsilon$-DP community detection. Moreover, the output of k-Means usually consists of equal-sized balls while this is not true for communities in graphs. Considering the graph as a high-dimensional dataset, we tried the private projection technique in \cite{kenthapadi2012privacy} which is followed by spectral clustering, but the modularity scores of the output are not better than random clustering.

Recent papers on $\epsilon$-DP recommender systems \cite{guerraoui2015d,banerjee2015price} show that privately learning the clustering of items from user ratings is hard unless we relax the value of $\epsilon$ up to $\log n$. Banerjee et al. \cite{banerjee2015price} model differentially private mechanisms as noisy channels and bound the mutual information between the generative sources and the privatized sketches. They show that in the information-rich regime (each user rates $O(n)$ items), their \textit{Pairwise-Preference} succeeds if the number of users is $\Omega(n\log n /\epsilon)$. Compared to $\epsilon$-DP community detection where the number of users is $n$, we should have $\epsilon=\Omega(\log n)$. Similarly, in D2P scheme, Guerraoui et al. \cite{guerraoui2015d} draw a formula for $\epsilon$ as
\begin{equation}
\epsilon_{D2P}^{(p,0,\lambda)} = \ln(1+\frac{(1-p).\mathcal{N}_E}{p.|\mathcal{G}_\lambda|})
\end{equation}
where $\lambda$ is the distance used to conceal the user profiles ($\lambda=0$ reduces to the classic notion of differential privacy). $\mathcal{N}_E$ is the number of items which is exactly $n$ in community detection. $|\mathcal{G}_\lambda|$ is the minimum size of user profiles at distance $\lambda$ over all users ($|\mathcal{G}_\lambda|$ = $o(\mathcal{N}_E)$ except at unreasonably large $\lambda$). Clearly, at $p = 0.5$ (as used in \cite{guerraoui2015d}), we have $\epsilon_{D2P}^{(0.5,0,\lambda)} \approx \ln n$. The sampling technique in D2P is very similar to EdgeFlip \cite{mulle2015privacy} which is shown ineffective in $\epsilon$-DP community detection for $\epsilon \in  (0,0.5\ln n)$ (Section \ref{sec:eval}). Note that $\epsilon$-DP community detection is unique in the sense that the set of items and the set of users are the same. Graphs for community detection are more general than bipartite graphs in recommender systems. In addition, modularity $Q$ (c.f. Formula \ref{eqn:modularity}) is \textit{non-monotone}, i.e. for two disjoint sets of nodes $A$ and $B$, $Q(A \cup B)$ may be larger, smaller than or equal to $Q(A) + Q(B)$.

To further emphasize the difficulty of $\epsilon$-DP community detection, we found that IDC scheme \cite{gupta2012iterative} using \textit{Sparse Vector Technique} \cite[Section 3.6]{dwork2014algorithmic} is hardly feasible. As shown in Algorithm 1 of \cite{gupta2012iterative}, to publish a noisy graph that can approximately answer all cut queries with bounded error $m^{0.25}n/\epsilon^{0.5}$, IDC must have $B(\alpha)$ ``yes'' queries among all $k$ queries. $B(\alpha)$ may be as low as $\sqrt{m}$ but $k = 2^{2n}$. In the average case, IDC incurs exponential time to complete.

To conclude, $\epsilon$-DP community detection is challenging and requires new techniques. In this paper, we evaluate the schemes for $\epsilon$ up to $0.5\ln n$. At $\epsilon = 0.5\ln n$, the multiplicative ratio (c.f. Definition \ref{def:e-dp}) is $e^\epsilon = e^{0.5\ln n} = \sqrt{n}$, a reasonable threshold for privacy protection compared to $\epsilon=\ln n$ (i.e. $e^\epsilon = n$) in $\epsilon$-DP recommender systems discussed above. The typical epsilon in the literature is 1.0 or less. However, this value is only applicable to graph metrics of low sensitivity $O(1)$ such as the number of edges, the degree sequence. The global sensitivity of other metrics like the diameter, the number of triangles, 2K-series etc. is $O(n)$, calling for local sensitivity analysis (e.g. \cite{nissim2007smooth}). For counting queries, Laplace/Geometric mechanisms are straightforward on real/integral (\textit{metric}) spaces. However, \textit{direct} noise adding mechanisms on the space $\mathcal{P}$ of all ways to partition the nodeset $V$ are non-trivial because $|\mathcal{P}| \approx n^n$ and $\mathcal{P}$ is non-metric. That is another justification for the relaxation of $\epsilon$ up to $0.5\ln n$.


\section{Input Perturbation}
\label{sec:input}
In this section, we propose the linear scheme LouvainDP that uses a filtering technique to build a noisy weighted supergraph and calls the exact Louvain method subsequently. Then we discuss several recent $\epsilon$-DP schemes that can be classified as input perturbation. Fig.\ref{fig:input-perturb} sketches the basic steps of the input perturbation paradigm.

\begin{figure}
	\begin{center}
        \begin{subfigure}[b]{0.42\textwidth}
                \centering
                \includegraphics[height=0.9in]{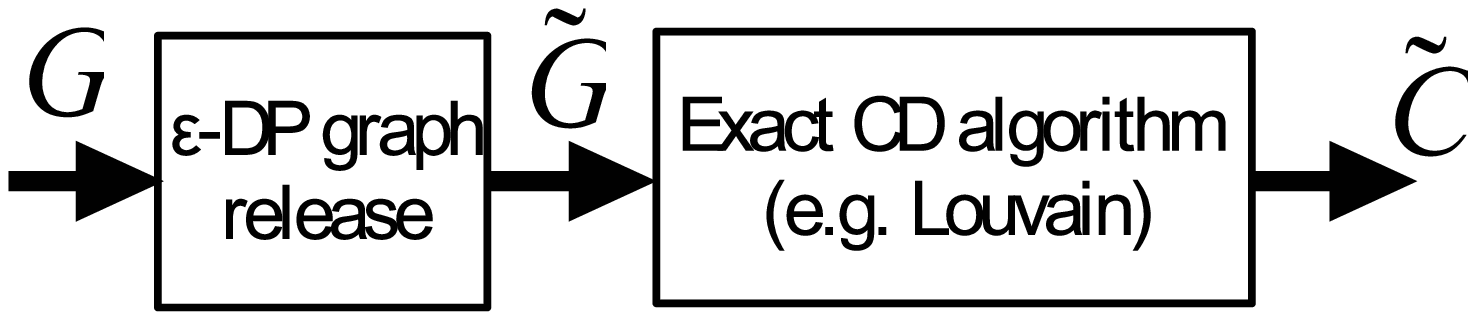}
                \setlength{\abovecaptionskip}{-20pt}
               	\captionof{figure}{Input perturbation}
                \label{fig:input-perturb}
        \end{subfigure}        
        \begin{subfigure}[b]{0.42\textwidth}
                \centering
                \includegraphics[height=0.9in]{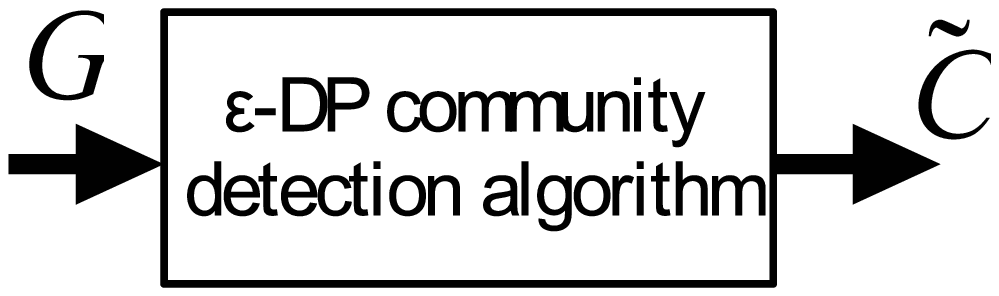}
                \setlength{\abovecaptionskip}{-20pt}
               	\captionof{figure}{Algorithm perturbation}
               	\label{fig:algo-perturb}
        \end{subfigure}
    \end{center}
    \setlength{\abovecaptionskip}{-5pt}
	\caption{Two categories of $\epsilon$-DP community detection}
    \vspace{-1.5em}
\end{figure}

\subsection{LouvainDP: Louvain Method on Noisy Supergraphs}
\label{subsec:louvaindp}
The basic idea of LouvainDP is to create a noisy weighted supergraph $G_1$ from $G$ by grouping nodes into supernodes of equal size $k$. We then apply the filtering technique of Cormode et al. \cite{cormode2012differentially} to ensure only $O(m)$ noisy weighted edges in $G_1$. Finally, we run the exact community detection on $G_1$.

In \cite{cormode2012differentially}, Cormode et al. propose several summarization techniques for sparse data under differential privacy. Let $M$ be a contingency table having the domain size $m_0$ and $m_1$ non-zero entries ($m_1 \ll m_0$ for sparse data), the conventional publication of a noisy table $M'$ from $M$ that satisfies $\epsilon$-DP requires the addition of Laplace/geometric noise to $m_0$ entries. The entries in $M'$ could be filtered (e.g. removing negative ones) and/or sampled to get a noisy summary $M''$. This direct approach would be infeasible for huge domain sizes $m_0$. Techniques in \cite{cormode2012differentially} avoid materializing the vast noisy data by computing the summary $M''$ directly from $M$ using filtering and sampling techniques.

In our LouvainDP, the supergraph $G_1$ is an instance of sparse data with the domain size $m_0 = \frac{n_1(n_1+1)}{2}$ where $n_1$ is the number of supernodes and $m_1$ non-zero entries corresponding to non-zero superedges. We use the one-sided filtering \cite{cormode2012differentially} to efficiently compute $G_1$ with $O(m)$ edges in linear time.

\subsubsection{Algorithm}
LouvainDP can run with either geometric or Laplace noise. We describe the version with geometric noise in Algorithm \ref{algo-louvaindp}.

\begin{algorithm}
\caption{LouvainDP($G,s$)}
\label{algo-louvaindp}
\begin{algorithmic}[1]
	\Require undirected graph $G$, group size $k$, privacy budget $\epsilon$
	\Ensure noisy partition $\tilde{C}$
	\State $G_1 \leftarrow \varnothing$, $n_1 = \lfloor \frac{|V|}{k}\rfloor -1$,  $V_1 \leftarrow \{0,1,..,n_1 \}$
	\State $\epsilon_2 = 0.1$, $\epsilon_1 = \epsilon - \epsilon_2$, $\alpha = \exp(-\epsilon_1)$
	\State get a random permutation $V_p$ of $V$
	\State compute the mapping $M: V_p \rightarrow V_1$
	\State compute superedges of $G_1$: $E_1 = \{e_1(i,j)\}$ where $i,j \in V_1$
	\State $m_1 = |E_1| + Lap(1/\epsilon_2)$, $m_0 = \frac{n_1(n_1+1)}{2}$
	\State $\theta = \lceil \log_{\alpha} \frac{(1+\alpha)m_1}{m_0-m_1} \rceil $	
	\State $s = (m_0 - m_1) \frac{\alpha^{\theta}}{1 + \alpha}$
	\For {$e_1(i,j)$ in $E_1$ }
		\State $e_1(i,j) = e_1(i,j) + Geom(\alpha)$
		\If {$e_1(i,j) \geq \theta$}
			\State add $e_1(i,j)$ to $G_1$
		\EndIf
	\EndFor
	\For {$s$ edges sampled uniformly at random $e_1(i,j) \notin E_1$}
		\State draw $w$ from the distribution $Pr[X \leq x] = 1 - \alpha^{x - \theta + 1}$
		\If {$w > 0$}
			\State add edge $e_1(i,j)$ with weight $w$ to $G_1$
		\EndIf
	\EndFor
	\State run Louvain method on $G_1$ to get $\tilde{C}_1$
	\State compute $\tilde{C}$ from $\tilde{C}_1$ using the mapping $M$ \\
	\Return $\tilde{C}$
\end{algorithmic}
\end{algorithm}

Given the group size $k$, LouvainDP starts with a supergraph $G_1$ having $\lfloor \frac{|V|}{k}\rfloor$ nodes by randomly permuting the nodeset $V$ and grouping every $k$ consecutive nodes into a supernode (lines 1-4). The permutation prevents the possible bias of node ordering in $G$. The set of superedges $E_1$ is easily computed from G. Note that $m_1 = |E_1| \leq m$ due to the fact that each edge of $G$ appears in one and only one superedge. The domain size is $m_0 = \frac{n_1(n_1+1)}{2}$ (i.e. we consider all selfloops in $G_1$). The noisy number of non-zero superedges is $m_1 = |E_1| + Lap(1/\epsilon_2)$. Then by \textit{one-sided} filtering \cite{cormode2012differentially}, we estimate the threshold $\theta$ (line 7) and the number of passing zero superedges $s$ (line 8). For each non-zero superedge, we add a geometric noise and add the superedge to $G_1$ if the noisy value is not smaller than $\theta$. For $s$ zero superedges $e_1(i,j) \notin E_1$, we draw an integral weight $w$ from the distribution $Pr[X \leq x] = 1 - \alpha^{x - \theta + 1}$ and add $e_1(i,j)$ with weight $w$ to $G_1$ if $w > 0$.

\subsubsection{Complexity}
LouvainDP runs in $O(m)$ because the loops to compute superedges (Line 5) and to add geometric noises (lines 9-12) cost $O(m)$. We have $s = (m_0 - m_1) \frac{\alpha^{\theta}}{1 + \alpha} \leq \frac{m_0 - m_1}{1+\alpha} \frac{(1+\alpha)m_1}{m_0-m_1} = m_1$ (see Line 7). So the processing of $s$ zero-superedges costs $O(m)$. Moreover, Louvain method (line 17) is empirically linear in $m_1$ \cite{blondel2008fast}. We come up with the following theorem.

\begin{theorem}
\label{lem:louvaindp-edges}
The number of edges in $G_1$ is not larger than $2m$. LouvainDP's runtime is $O(m)$
\end{theorem}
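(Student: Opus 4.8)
The plan is to bound the edge count of $G_1$ by splitting the edges into two groups: the noisy non-zero superedges that survive the threshold (lines 9--12) and the sampled zero-superedges that receive positive weight (lines 13--16). For the first group, there are at most $m_1 = |E_1|$ non-zero superedges, and since each edge of $G$ falls into exactly one superedge (as noted after line 4), we have $|E_1| \le m$; the filtering step only removes superedges, so at most $m$ edges come from this group. For the second group, I would invoke the complexity discussion immediately preceding the theorem: the number of sampled zero-superedges is $s = (m_0 - m_1)\frac{\alpha^\theta}{1+\alpha}$, and plugging in $\theta = \lceil \log_\alpha \frac{(1+\alpha)m_1}{m_0 - m_1}\rceil$ gives $\alpha^\theta \le \frac{(1+\alpha)m_1}{m_0 - m_1}$ (note $\alpha < 1$ so $\log_\alpha$ is decreasing and the ceiling makes the exponent larger, hence $\alpha^\theta$ smaller), so $s \le \frac{m_0 - m_1}{1+\alpha}\cdot\frac{(1+\alpha)m_1}{m_0 - m_1} = m_1 \le m$. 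Since each sampled zero-superedge contributes at most one edge to $G_1$, the second group adds at most $m$ edges. Summing, $|E_{G_1}| \le m + m = 2m$.

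For the runtime claim, I would walk through the algorithm line by line. Lines 1--4 (permutation and mapping) are $O(n)$; line 5 computing the superedges $E_1$ is $O(m)$ since we scan each edge of $G$ once and look up its two endpoints' supernodes; line 6 is $O(1)$; lines 7--8 are $O(1)$ arithmetic. The loop in lines 9--12 iterates over $E_1$, hence costs $O(m_1) = O(m)$. The loop in lines 13--16 iterates over $s \le m_1 \le m$ sampled zero-superedges, so it is also $O(m)$ (assuming $O(1)$ per sample for drawing $w$ and for selecting a uniformly random non-present superedge; one should note the latter is a mild idealization but is the standard accounting used in \cite{cormode2012differentially}). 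Line 17, running Louvain on $G_1$, is empirically linear in the number of edges of $G_1$, which we just bounded by $2m$, hence $O(m)$; this is the one step that is ``empirically'' rather than provably linear, consistent with how \cite{blondel2008fast} is cited throughout. Finally line 18 mapping $\tilde C_1$ back to $\tilde C$ is $O(n)$. Adding these up and using $n = O(m)$ for connected graphs (or treating $n$ and $m$ as separate but both dominated), the total is $O(m)$.

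The main obstacle, such as it is, lies not in any deep argument but in making the $s \le m_1$ inequality fully rigorous: one must be careful with the direction of the inequality when $\alpha = e^{-\epsilon_1} \in (0,1)$, since $\log_\alpha$ is a decreasing function, and also handle the edge cases where $m_0 - m_1$ could be small or $m_1$ could exceed $m_0$ after adding Laplace noise (in which case $\theta$ and $s$ degenerate and the ``$\le 2m$'' bound should be argued to still hold, e.g. because then essentially all of the small domain is released and $m_0$ itself is $O(m)$ is not automatic, so one may instead simply cap $s$ at $m_0 - m_1$ and observe the released edge total is at most $m_0$, which the algorithm's regime makes comparable to $m$). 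I would state the theorem under the implicit assumption that the graph is sparse and connected so $n = O(m)$, and present the two-group decomposition as the clean core of the argument, relegating the Louvain linearity to an ``empirically'' qualifier exactly as the surrounding text does.
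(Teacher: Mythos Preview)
Your proposal is correct and follows essentially the same approach as the paper. The paper's argument (given in the Complexity paragraph immediately preceding the theorem) is exactly your two-group decomposition: it notes $m_1=|E_1|\le m$ since each edge of $G$ lies in a unique superedge, then shows $s=(m_0-m_1)\frac{\alpha^\theta}{1+\alpha}\le m_1$ via the definition of $\theta$, and appeals to the empirical linearity of Louvain on $G_1$; your write-up is simply more explicit about the monotonicity of $\log_\alpha$ and the per-line accounting.
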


\subsubsection{Privacy Analysis}
In LouvainDP, we use a small privacy budget $\epsilon_2 = 0.01$ to compute the noisy number of non-zero superedges $m_1$. The remaining privacy budget $\epsilon_1$ is used for the geometric mechanism $Geom(\alpha)$. Note that getting a random permutation $V_p$ (line 3) costs no privacy budget.
The number of nodes $n$ is public and given the group size $k$, the number of supernodes $n_1$ is also public. The high-pass filtering technique (Lines 6-16) inherits the privacy guarantee by \cite{cormode2012differentially}. By setting $\epsilon_1 = \epsilon - \epsilon_2$, LouvainDP satisfies $\epsilon$-differential privacy (see the sequential composition (Theorem \ref{theorem:composition})).

\subsection{Alternative Input Perturbation Schemes}
1K-series \cite{wang2013preserving}, DER \cite{chen2014correlated}, TmF \cite{nguyen2015differentially} and EdgeFlip \cite{mulle2015privacy} are the most recent differentially private schemes for graph release that can be classified as input perturbation. While 1K-series and TmF run in linear time, DER and EdgeFlip incur a quadratic complexity. DER and EdgeFlip are therefore tested only on two medium-sized graphs in Section \ref{sec:eval}. 

The expected number of edges by EdgeFlip is $|E_{\tilde{G}}| = (1-s)m + \frac{n(n-1)}{4} s$ (see \cite{mulle2015privacy}) where $s = \frac{2}{e^{\epsilon} + 1}$ is the flipping probability. Substitute $s$ into $|E_{\tilde{G}}|$, we get $ |E_{\tilde{G}}| = m + (\frac{n(n-1)}{4} - m)\frac{2}{e^{\epsilon} + 1}$. The number of edges in the noisy graph $\tilde{G}$ generated by EdgeFlip increases exponentially as $\epsilon$ decreases. To ensure the linear complexity for million-scale graphs, we propose a simple extension of EdgeFlip, called \textit{EdgeFlipShrink} (Algorithm \ref{algo-edgeflipshrink}). 

Instead of outputting $\tilde{G}$, EdgeFlipShrink computes $\hat{G}$ that has the expected number of edges $m$ by shrinking $E_{\tilde{G}}$. First, the algorithm compute the private number of edges $\tilde{m}$ using a small budget $\epsilon_2$ (Lines 2-3 ). The new flipping probability $\tilde{s}$ is updated (Line 5). The noisy expected number of edges in the original EdgeFlip is shown in Line 6. We obtain the shrinking factor $p = \frac{\tilde{m}}{m_0}$ (Line 7). Using $p$, every 1-edge is sampled with probability $\frac{1-\tilde{s}}{2} p$ instead of $\frac{1-s}{2}$ as in \cite{mulle2015privacy}. The remaining 0-edges are randomly picked from $E_G$ as long as they do not exist in $\hat{G}$ (Lines 14-19).

The expected edges of $\hat{G}$ is $E[\hat{G}] = E[\tilde{m}] = m$ and the running time of EdgeFlipShrink is $O(m)$.

\begin{algorithm}
\caption{EdgeFlipShrink($G,s$)}
\label{algo-edgeflipshrink}
\begin{algorithmic}[1]
	\Require undirected graph $G$, flipping probability $s$
	\Ensure anonymized graph $\hat{G}$
	\State $\hat{G} \leftarrow \varnothing$	
	\State $\epsilon_2 = 0.1$
	\State $\tilde{m} = m + Lap(1/\epsilon_2)$
	\State $\epsilon = \ln(\frac{2}{s}-1) - \epsilon_2$	
	\State $\tilde{s} = \frac{2}{e^\epsilon + 1}$
	\State $m_0 = (1-\tilde{s})\tilde{m} + \frac{n(n-1)}{4}\tilde{s}$
	\State $p = \frac{\tilde{m}}{m_0}$	
	\State // process 1-edges
	\State $n_1 = 0$
	\For {edge $(i,j) \in E_G$}
		\State add edge $(i,j)$ to $\hat{G}$ with prob. $\frac{1-\tilde{s}}{2} p$	
		\State $n_1 ++$
	\EndFor	
	\State // process 0-edges
	\State $n_0 = \tilde{m} - n_1$
	\While {$n_0 > 0$}
		\State random pick an edge $(i,j) \notin E_G$
		\If {$\hat{G}$ does not contain $(i,j)$}
			\State add edge $(i,j)$ to $\hat{G}$
			\State $n_0$- -
		\EndIf
	\EndWhile \\
	\Return $\hat{G}'$
\end{algorithmic}
\end{algorithm}

\section{Algorithm Perturbation}
\label{sec:algo}
The schemes in the algorithm perturbation category privately sample a node clustering from the input graph without generating noisy sample graphs as in the input perturbation. This can be done via the exponential mechanism. We introduce our main scheme \textit{ModDivisive} in Section \ref{subsec:moddivisive} followed by a variant of HRG-MCMC runnable on large graphs (Section \ref{subsec:hrg-fixed}). Fig.\ref{fig:algo-perturb} sketches the basic steps of the algorithm perturbation paradigm.

\subsection{ModDivisive: Top-down Exploration of Cohesive Groups}
\label{subsec:moddivisive}
\subsubsection{Overview}
In contrast with the agglomerative approaches (e.g. Louvain method) in which small communities are iteratively merged if doing so increases the modularity, our ModDivisive is a divisive algorithm in which communities at each level are iteratively split into smaller ones. Our goal is to heuristically detect cohesive groups of nodes in a private manner. There are several technical challenges in this process. The first one is how to efficiently find a good split of nodes that induces a high modularity and satisfies $\epsilon$-DP at the same time. The second one is how to merge the small groups to  larger ones. 
We cope with the first challenge by realizing an exponential mechanism via MCMC (Markov Chain Monte-Carlo) sampling with the modularity as the score function (see Theorem \ref{def:exp-mech}). The second challenge is solved by dynamic programming.
We design ModDivisive as a \textit{k-ary} tree (Fig.\ref{fig:mod-divisive}), i.e. each internal node has no more than $k$ child nodes. The root node (level 0) contains all nodes in $V$ and assigns arbitrarily each node into one of the $k$ groups. Then we run the MCMC over the space of all partitions of $V$ into no more than $k$ nonempty subsets. The resultant subsets are initialized as the child nodes (level 1) of the root. The process is repeated iteratively for each child node at level 1 and stops at level $maxL$. Fig.\ref{fig:mod-divisive} illustrates the idea with $k=3$ for the graph in Fig.\ref{fig:louvain}.
\begin{figure}
\centering
\includegraphics[height=1.4in]{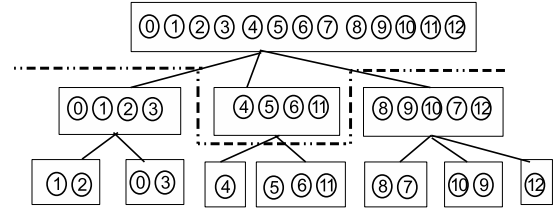}
\setlength{\abovecaptionskip}{-10pt}
\caption{Example of ModDivisive with $k=3$. A cut $C$ is shown by the dot-dashed line}
\label{fig:mod-divisive}
\vspace{-1.0em}
\end{figure}

\subsubsection{Algorithm}

Algorithm \ref{algo-moddivisive} sketches the main steps in our scheme ModDivisive. It comprises two phases: differentially private sampling a k-ary tree of depth $maxL$ which uses the privacy budget $\epsilon_1$ and finding the best cut across the tree to get a good clustering of nodes which consumes a budget $maxL.\epsilon_m$. 

The first phase (lines 1-14) begins with the creation of $eA$, the array of privacy budgets allocated to levels of the tree. We use the parameter $\lambda \geq 1$ as the common ratio to form a geometric sequence. The rationale behind the common ratio is to give higher priority to the levels near the root which have larger node sets. By sequential composition (Theorem \ref{theorem:composition}), we must have $\sum_{i}eA[i] = \epsilon_1$. All internal nodes at level $i$ do the MCMC sampling on disjoint subsets of nodes, so the parallel composition holds. Subsection \ref{subsubsec:priv} analyzes the privacy of ModDivisive in more detail. We use a queue to do a level-by-level exploration. Each dequeued node $r$'s level will be checked. If its level is not larger than $maxL$, we will run \textit{ModMCMC} (Algorithm \ref{algo-modmcmc}) on it (line 9) to get a partition $r.part$ of its nodeset $r.S$ (Fig.\ref{fig:mod-divisive}). Each subset in $r.part$ forms a child node and is pushed to the queue. The second phase (line 15) calls Algorithm \ref{algo-bestcut} to find a highly modular partition across the tree.

\begin{algorithm}
\caption{ModDivisive}
\label{algo-moddivisive}
\begin{algorithmic}[1]
	\Require graph $G$, group size $k$, privacy budget $\epsilon$, max level $maxL$, ratio $\lambda$, BestCut privacy at each level $\epsilon_m$
	\Ensure noisy partition $\tilde{C}$
	\State compute the array $eA[0..maxL-1]$ s.t. $\sum_{i}eA[i] = \epsilon_1$, $eA[i] = eA[i+1]*\lambda$ where $\epsilon_1 = \epsilon - maxL.\epsilon_m$
	\State initialize the root node with nodeset $V$
	\State $root = \text{NodeSet}(G,V,k)$
	\State $root.level = 0$
	\State queue $Q \leftarrow root$
	\While {$Q$ is not empty}
		\State $r \leftarrow Q.dequeue()$
		\If {$r.level < maxL$}
			\State $r.part = \text{ModMCMC}(G,r.S,k,eA[r.level])$	
			\For {subset $S_i$ in $r.part$}
				\State $P_i = \text{NodeSet}(G,S_i,k)$	
				\State $P_i.level = r.level + 1$	
				\State $r.child_i$ $\leftarrow P_i$ 
				\State $Q.enqueue(P_i)$
			\EndFor
		\EndIf
	\EndWhile
	\State $\tilde{C} \leftarrow \text{BestCut}(root, \epsilon_m)$ \\
	\Return $\tilde{C}$
\end{algorithmic}
\end{algorithm}

\textit{\textbf{Differentially Private Nodeset Partitioning}}
Let $\mathcal{P}$ be the space of all ways $P$ to partition a nodeset $A$ to no more than $k$ disjoint subsets, the direct application of exponential mechanism needs the enumeration of $\mathcal{P}$. 
The probability of a partition $P$ being sampled is 
\begin{equation}
\frac{\exp(\frac{\epsilon_p}{2\Delta Q} Q(P,G))} {\sum_{P'\in \mathcal{P}} \exp(\frac{\epsilon_p}{2\Delta Q} Q(P',G))}
\end{equation}
However, $|\mathcal{P}| = \sum_{i=1}^{k} S(|A|,i)$ where $S(|A|,i)$ is the Stirling number of the second kind \cite{graham1994concrete}, $S(n,k) \approx \frac{k^n}{k!}$. This sum is exponential in $|A|$, so enumerating $\mathcal{P}$ is computationally infeasible. Fortunately, MCMC can help us simulate the exponential mechanism by a sequence of local transitions in $\mathcal{P}$.

The space $\mathcal{P}$ is connected. It is straightforward to verify that the transitions performed in line 3 of ModMCMC are \textit{reversible} and \textit{ergodic} (i.e. any pair of nodeset partitions can be connected by a sequence of such transitions). Hence, ModMCMC has a unique stationary distribution in equilibrium. By empirical evaluation, we observe that ModMCMC converges after $K|r.S|$ steps for $K = 50$ (see Section \ref{subsec:eval-div}).

Each node $r$ in the tree is of type \textit{NodeSet}. This struct consists of an array $r.part$ where $r.part[u] \in \{0..k-1\}$ is the group id of $u$. To make sure that ModMCMC runs in linear time, we must have a constant time computation of modularity $Q(P)$ (line 4 of Algorithm \ref{algo-modmcmc}). This can be done with two helper arrays: the number of intra-edges $r.lc[0..k-1]$ and the total degree of nodes $r.dc[0..k-1]$ in each group. The modularity $Q$ is computed in $O(k)$ (Formula \ref{eqn:modularity}) using $r.lc, r.dc$. When moving node $u$ from group $i$ to group $j$, $r.lc$ and $r.dc$ are updated accordingly by checking the neighbors of $u$ in $G$. The average degree is a constant, so the complexity of ModMCMC is linear in the number of MCMC steps.

\begin{algorithm}
\caption{ModMCMC}
\label{algo-modmcmc}
\begin{algorithmic}[1]
	\Require graph $G$, nodeset $r.S$, group size $k$, privacy budget $\epsilon_p$
	\Ensure sampled partition $r.part$
	\State initialize $r.part$ with a random partition $P_0$ of $k$ groups
	\For {each step $i$ in the Markov chain}
		\State pick a neighboring partition $P'$ of $P_{i-1}$ by randomly selecting node $u \in r.S$ and moving $u$ to another group.
		\State accept the transition and set $P_i = P'$ with probability $min(1,\frac{\exp(\frac{\epsilon_p}{2\Delta Q} Q(P',G))} {\exp(\frac{\epsilon_p}{2\Delta Q} Q(P_{i-1},G))})$
	\EndFor
	\State // until equilibrium is reached \\
	\Return a sampled partition $r.part = P_i$
\end{algorithmic}
\end{algorithm}

\textit{\textbf{Finding the Best Cut}}
Given the output k-ary tree $R$ with the root node $root$, our next step is to find the best cut across the tree. A cut $C$ is a set of nodes in $R$ that cover all nodes in $V$. As an example, a cut $C$ in Fig.\ref{fig:mod-divisive} returns the clustering [\{0,1,2,3\}, \{4\}, \{5,6,11\}, \{8,9,10,7,12\}]. Any cut has a modularity score. Our goal is to find the best cut, i.e. the cut with highest modularity, in a private manner. 

We solve this problem by a dynamic programming technique. Remember that modularity is an \textit{additive} quantity (c.f. Formula \ref{eqn:modularity}). By denoting $opt(r)$ as the optimal modularity for the subtree rooted at node $r$, the optimal value is $opt(root)$. The recurrence relation is straightforward
\begin{equation}
opt(r) = \max \{Q(r), \sum_{t \in r.children} opt(t)\} \nonumber
\end{equation}

Algorithm \ref{algo-bestcut} realizes this idea in three steps. The first step (lines 1-6) uses a queue to fill a stack $S$. The stack ensures any internal node to be considered after its child nodes. The second step (lines 7-17) solves the recurrence relation. Because all modularity values are sensitive, we add Laplace noise $\text{Laplace}(\Delta Q/\epsilon_m)$. The global sensitivity $\Delta Q = 3/m$ (see Theorem \ref{theorem:delta-Q}), so we need only a small privacy budget for each level ($\epsilon_m = 0.01$ is enough in our experiments). The noisy modularity $mod_n$ is used to decide whether the optimal modularity at node $r$ is by itself or by the sum over its children. The final step (lines 18-25) backtracks the best cut from the root node.

\begin{algorithm}
\caption{BestCut}
\label{algo-bestcut}
\begin{algorithmic}[1]
	\Require undirected graph $G$, root node $root$, privacy budget at each level $\epsilon_m$
	\Ensure best cut $C$
	\State stack $S \leftarrow \varnothing$, queue $Q \leftarrow \text{root}$
	\While {$Q$ is not empty}
		\State $r \leftarrow Q.dequeue()$
		\State $S.push(r)$
		\For {child node $r_i$ in $r.children$}
			\State $Q.enqueue(r_i)$
		\EndFor
	\EndWhile
	\State dictionary $sol \leftarrow \varnothing$
	\While {$S$ is not empty}
		\State $r \leftarrow S.pop()$, $r.mod_n = r.mod + \text{Laplace}(\Delta Q/\epsilon_m)$
		\If {$r$ is a leaf node}			
			\State $sol.put(r.id, (\text{val}=r.mod_n, \text{self}=True))$
		\Else
			\State $s_m = \sum_{r_i \in r.children} sol[r_i.id].mod_n$
			\If {$r.mod_n < s_m$}
				\State $sol.put(r.id, (\text{val}=s_m, \text{self}=False))$
			\Else
				\State $sol.put(r.id, (\text{val}=r.mod_n, \text{self}=True))$
			\EndIf
		\EndIf
	\EndWhile
	\State list $C \leftarrow \varnothing$, queue $Q \leftarrow \text{root}$
	\While {$Q$ is not empty}
		\State $r \leftarrow Q.dequeue()$
		\If {$sol[r.id].\text{self} == True$}
		\State $C = C \cup \{r\}$
		\Else
			\For {child node $r_i$ in $r.children$}
				\State $Q.enqueue(r_i)$
			\EndFor
		\EndIf
	\EndWhile	
	\Return $C$
\end{algorithmic}
\end{algorithm}

\subsubsection{Complexity}
ModDivisive creates a k-ary tree of height $maxL$. At each node $r$ of the tree other than the leaf nodes, ModMCMC is run once. The run time of ModMCMC is $O(K*|r.S|)$ thanks to the constant time for updating the modularity (line 4 of ModMCMC). Because the union of nodesets at one level is $V$, the total runtime is $O(K*|V|*maxL)$. BestCut only incurs a sublinear runtime because the size of tree is always much smaller than $|V|$. The following theorem states this result

\begin{theorem}
The time complexity of ModDivisive is linear in the number of nodes $n$, the maximum level $maxL$ and the burn-in factor $K$.
\end{theorem}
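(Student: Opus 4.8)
The plan is to prove the bound $O(K \cdot n \cdot maxL)$ by accounting separately for the two phases of Algorithm~\ref{algo-moddivisive}: the top-down construction of the $k$-ary tree (lines 1--14), whose cost is dominated by the calls to \textit{ModMCMC}, and the single call to \textit{BestCut} in line~15; the remaining bookkeeping — forming the budget array $eA$ in line~1, allocating the \textit{NodeSet} structures, and the queue operations — will be shown to contribute only lower-order terms. Throughout I treat the fan-out $k$ and the average degree of $G$ as constants, consistent with the sparse, linear-time regime assumed for Louvain in \cite{blondel2008fast}.

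First I would bound a single invocation of \textit{ModMCMC} (Algorithm~\ref{algo-modmcmc}) on a nodeset $r.S$. Initializing the random partition $P_0$ together with the helper arrays $r.lc[0..k-1]$ and $r.dc[0..k-1]$ requires $O(|r.S|)$ work plus one scan of the incident edges, which is $O(|r.S|)$ under bounded average degree. The chain then runs for $K\,|r.S|$ steps (the empirically observed burn-in), and by the discussion preceding the theorem each step is $O(1)$: selecting a node $u$ and a target group is $O(1)$; moving $u$ updates $r.lc, r.dc$ by inspecting only $u$'s neighbours, again $O(1)$ amortised; and $Q(P',G)$ is recomputed in $O(k)=O(1)$ from $r.lc, r.dc$ via Formula~\ref{eqn:modularity}. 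Hence one call of \textit{ModMCMC} on $r.S$ costs $O(K\,|r.S|)$.

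Next I would sum this cost over the tree. \textit{ModMCMC} is invoked exactly once at each tree node of level $<maxL$, i.e. at the nodes of levels $0,\dots,maxL-1$, and the key observation is that the nodesets attached to the tree nodes of any fixed level $\ell$ form a partition of $V$ (line~9 splits $r.S$ into disjoint child subsets, and a branch is extended at every level it reaches), so $\sum_{r:\,r.level=\ell}|r.S| = n$. Therefore the total cost of all \textit{ModMCMC} calls at level $\ell$ is $O(Kn)$, and summing over the $maxL$ such levels gives $O(K\,n\,maxL)$. The same partition argument makes the \textit{NodeSet} creations and enqueue/dequeue operations in lines 10--14 cost $O(n)$ per level, i.e. $O(n\,maxL)$ in total, and line~1 costs $O(maxL)$. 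For \textit{BestCut} (Algorithm~\ref{algo-bestcut}), a constant number of passes over the tree are performed with $O(k)$ work per tree node, and the tree has at most $\sum_{\ell=0}^{maxL}k^\ell = O(k^{maxL})$ nodes — a quantity depending only on the fixed parameters $k,maxL$ and, as noted in the text, much smaller than $n$ in practice — so this term is absorbed. Adding up yields $O(K\,n\,maxL)$, i.e. linear in $n$, in $maxL$ and in $K$.

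I expect the only genuinely delicate point to be the $O(1)$-per-step claim for \textit{ModMCMC}: it relies on (i) maintaining the incremental statistics $r.lc, r.dc$ so that a node move touches only $u$'s neighbourhood instead of recomputing modularity from scratch, and (ii) absorbing $k$ and the average degree into the constant. If one keeps $k$ explicit the bound reads $O(k\,K\,n\,maxL)$, and if $G$ is dense the average degree re-enters the per-step cost; I would therefore state the bounded-average-degree assumption explicitly so that the linear-in-$(n,maxL,K)$ statement is clean. Everything else — the level-wise partition property and the absorption of \textit{BestCut} into a parameter-only term — is routine.
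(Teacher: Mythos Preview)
Your proposal is correct and follows essentially the same argument as the paper: a single \textit{ModMCMC} call costs $O(K\,|r.S|)$ because each step updates modularity in constant time, the nodesets at any fixed level partition $V$ so the per-level cost is $O(Kn)$, and summing over $maxL$ levels gives $O(K\,n\,maxL)$, with \textit{BestCut} absorbed as a lower-order term. You are in fact more explicit than the paper about the hidden assumptions (treating $k$ and the average degree as constants) and about the bookkeeping costs, which only strengthens the write-up.
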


\subsubsection{Privacy Analysis}
\label{subsubsec:priv}

We show that ModMCMC satisfies differential privacy. The goal of MCMC is to draw a random sample from the desired distribution. Similarly, exponential mechanism is also a method to sample an output $x \in X$ from the target distribution with probability proportional to $\exp(\frac{\epsilon u(x)}{2\Delta u})$ where $u(x)$ is the score function ($x$ with higher score has bigger chance to be sampled) and $\Delta u$ is its sensitivity. The idea of using MCMC to realize exponential mechanism is first proposed in \cite{chaudhuri2013near} and applied to $\epsilon$-DP graph release in \cite{xiao2014differentially}.

In our ModMCMC, the modularity $Q(P,G)$ is used directly as the score function. We need to quantify the global sensitivity of $Q$. From Section \ref{sec:diffpriv}, we have the following definition

\begin{definition} (Global Sensitivity $\Delta Q$)
\label{def:sensitivity}
\begin{equation}
\Delta Q = \max_{P,G,G'}|Q(P,G) - Q(P,G')|
\end{equation}
\end{definition}

We prove that $\Delta Q = O(1/m)$ in the following theorem
\begin{theorem}
\label{theorem:delta-Q}
The global sensitivity of modularity, $\Delta Q$, is smaller than $\frac{3}{m}$		
\end{theorem}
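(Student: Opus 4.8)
The plan is to reduce to a single edge insertion and then split the modularity into its two natural summands. Since $|Q(P,G)-Q(P,G')|$ and the edge‑neighbor relation are both symmetric, I may assume $E_G\subset E_{G'}$ with $|E_{G'}|=m+1$, where $m:=|E_G|\ge 1$, and let $e=(u,v)$ be the inserted edge. Fixing the partition $P$, I would write $Q(P,\cdot)=S_1(\cdot)-S_2(\cdot)$ with $S_1=\tfrac1m\sum_c l_c$ (the intra‑community edge term) and $S_2=\tfrac1{4m^2}\sum_c d_c^2$ (the null‑model term), so that by the triangle inequality it suffices to prove $|S_1(G)-S_1(G')|<\tfrac1m$ and $|S_2(G)-S_2(G')|<\tfrac2m$.

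For the $S_1$ term, let $L=\sum_c l_c$ be the number of intra‑community edges of $G$, so $0\le L\le m$. Inserting $e$ raises $L$ by $1$ when $u$ and $v$ lie in the same part of $P$ and leaves it unchanged otherwise; call the new value $L'\in\{L,L+1\}$. Evaluating $\frac{L'}{m+1}-\frac{L}{m}$ in the two cases gives $\frac{-L}{m(m+1)}$ and $\frac{m-L}{m(m+1)}$ respectively, so in both cases the absolute value is at most $\frac{1}{m+1}<\frac1m$ by $0\le L\le m$.

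For the $S_2$ term, I would track $D=\sum_c d_c^2$. Since $\sum_c d_c=2m$ and all $d_c\ge 0$, we have $D\le(\sum_c d_c)^2=4m^2$ and every $d_c\le 2m$. Inserting $e$ turns $D$ into $D'=D+\delta$, where $\delta=4d_{c^*}+4$ if $u,v$ share a part $c^*$ and $\delta=2d_{c_1}+2d_{c_2}+2$ if they lie in distinct parts $c_1,c_2$; in either case $0<\delta\le 8m+4$. Then
\[
S_2(G')-S_2(G)=\frac{D'}{4(m+1)^2}-\frac{D}{4m^2}=\frac{m^2\delta-(2m+1)D}{4m^2(m+1)^2}.
\]
Dropping the nonnegative term $(2m+1)D$ yields the upper bound $\frac{\delta}{4(m+1)^2}\le\frac{2m+1}{(m+1)^2}$, and dropping the positive term $m^2\delta$ together with $D\le 4m^2$ yields the lower bound $-\frac{2m+1}{(m+1)^2}$; both lie strictly inside $(-\tfrac2m,\tfrac2m)$ because $m(2m+1)<2(m+1)^2$. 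Adding the two estimates gives $|Q(P,G)-Q(P,G')|<\tfrac1m+\tfrac2m=\tfrac3m$ for every $P,G,G'$, hence $\Delta Q<\tfrac3m$.

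The hard part is getting the $S_2$ bound right: if one applies the triangle inequality \emph{separately} to the change of the denominator $4m^2\to 4(m+1)^2$ and to the change of the numerator $D\to D'$, the two pieces are each of size $\approx 2/m$ but with the same sign, so one loses a factor and ends up with only $\approx 4/m$. The fix is to keep the single exact fraction $\frac{m^2\delta-(2m+1)D}{4m^2(m+1)^2}$, in which a large $\delta$ is automatically compensated by a large subtracted term $(2m+1)D$ (since $D\ge d_{c^*}^2$), and to bound its one‑sided extremes directly as above; everything else is elementary casework on whether the endpoints of $e$ fall in one part or in two.
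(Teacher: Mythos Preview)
Your proof is correct and uses the same elementary ingredients as the paper (the constraints $\sum_c l_c\le m$, $\sum_c d_c^2\le 4m^2$, $d_c\le 2m$, and the exact fraction for the change in the null-model term), but you organize the argument differently: you split $Q=S_1-S_2$ and bound $|\Delta S_1|<1/m$ and $|\Delta S_2|<2/m$ separately via the triangle inequality, whereas the paper keeps the full difference $Q(P,G')-Q(P,G)$ intact in each of the two cases (intra- vs.\ inter-community edge) and bounds its one-sided extremes directly, obtaining the slightly sharper $|d_1|<3/(m+1)$ and $|d_2|<2/(m+1)$. Your decomposition is cleaner to read and avoids duplicating the two-case analysis for the edge-count part, at the cost of a mild looseness from the triangle inequality; the paper's version extracts a tiny bit more (the $m+1$ denominator) but is otherwise the same computation.

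One small caveat: you write $\sum_c d_c=2m$, but in the paper's setting the partition $P$ may cover only a subset $V_p\subsetneq V$ (the non-root nodes of the $k$-ary tree), so in general only $\sum_c d_c\le 2m$ holds. This does not affect your argument, since you only use it to deduce $D\le 4m^2$ and $d_c\le 2m$, both of which remain true under the inequality.
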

\begin{proof} 
\label{proof:delta-Q}
Given the graph $G$ and a partition $P$ of a nodeset $V_p \subseteq V$ (for any node of the k-ary tree other than the root node, its nodeset $V_p$ is a strict subset of $V$), the neighboring graph $G'$ has $E_{G'} = E_G \cup e$. We have two cases

\textit{Case 1.} The new edge $e$ is an intra-edge within the community $s$.
The modularity $Q(P,G)$ is $\sum_{c}^{k} (\frac{l_c}{m} - \frac{d_c^2}{4m^2})$. The modularity $Q(P,G')$ is $\sum_{c\neq s}^{k} (\frac{l_c}{m+1} - \frac{d_c^2}{4(m+1)^2}) + (\frac{l_s+1}{m+1} - \frac{(d_s+2)^2}{4(m+1)^2})$.

The difference $d_1 = Q(P,G') - Q(P,G) = \frac{1}{m+1} -\frac{1}{m(m+1)} \sum_{c}^{k}l_c + \frac{2m+1}{4m^2(m+1)^2} \sum_{c}^{k}d_c^2 - \frac{d_s+1}{(m+1)^2}$

Because $\Delta Q$ is the absolute value of $d_1$, we consider the most positive and the most negative values of $d_1$. Remember that $\sum_{c}^{k}d_c \leq 2m$, so the positive bound $d_1 < \frac{1}{m+1} + \frac{(2m+1)4m^2}{4m^2(m+1)^2} < \frac{3}{m+1}$. For the negative bound, we use the constraints $\sum_{c}^{k}l_c \leq m$ and $d_s \leq 2m$, so $d_1 > \frac{1}{m+1} - \frac{m}{m(m+1)} - \frac{2m+1}{(m+1)^2} > \frac{-2}{m+1}$.
As a result, $\Delta Q = |d_1| < \frac{3}{m+1} < \frac{3}{m}$.

\vspace{5mm} 
\textit{Case 2.} The new edge $e$ is an inter-edge between the communities $s$ and $t$.
Similarly, we have $Q(P,G) = \sum_{c}^{k} (\frac{l_c}{m} - \frac{d_c^2}{4m^2})$ while $Q(P,G') = \sum_{c\neq s,t}^{k} (\frac{l_c}{m+1} - \frac{d_c^2}{4(m+1)^2}) + (\frac{l_s}{m+1} - \frac{(d_s+1)^2}{4(m+1)^2}) + (\frac{l_t}{m+1} - \frac{(d_t+1)^2}{4(m+1)^2})$.

The difference $d_2 = Q(P,G') - Q(P,G) = -\frac{1}{m(m+1)} \sum_{c}^{k}l_c + \frac{2m+1}{4m^2(m+1)^2} \sum_{c}^{k}d_c^2 - \frac{2d_s+2d_t+2}{4(m+1)^2}$. Again, we consider the most positive and the most negative values of $d_2$, using the constraint $\sum_{c}^{k}d_c \leq 2m$, the positive bound $d_2 < \frac{(2m+1)4m^2}{4m^2(m+1)^2} < \frac{2}{m+1}$. For the negative bound, we use the constraints $\sum_{c}^{k}l_c \leq m$ and $d_s+d_t \leq 2m$, so $d_2 > - \frac{m}{m(m+1)} - \frac{4m+2}{4(m+1)^2} > \frac{-2}{m+1}$. As a result, $\Delta Q = |d_2| < \frac{2}{m+1} < \frac{2}{m}$.

To recap, in both cases $\Delta Q < \frac{3}{m}$.
\end{proof}

\subsection{A Variant of HRG-MCMC}
\label{subsec:hrg-fixed}
Similar to DER and EdgeFlip, HRG-MCMC \cite{xiao2014differentially} runs in quadratic time due to its costly MCMC steps. Each MCMC step in HRG-MCMC takes $O(n)$ to update the tree. To make it runnable on million-scale graphs, we describe briefly a variant of HRG-MCMC called \textit{HRG-Fixed} (see Fig.\ref{fig:hrg-fixed}). Instead of exploring the whole space of HRG trees as in HRG-MCMC, HRG-Fixed selects a fixed binary tree beforehand. We choose a balanced tree in our HRG-Fixed. Then HRG-Fixed realizes the exponential mechanism by sampling a permutation of $n$ leaf nodes (note that each leaf node represents a graph node). The next permutation is constructed from the current one by randomly choosing a pair of nodes and swap them. The bottom of Fig.\ref{fig:hrg-fixed} illustrates the swap of two nodes $a$ and $d$. The log-likelihood $L$ still has the sensitivity $\Delta L = 2\ln n$ as in HRG-MCMC \cite{xiao2014differentially}. Each sampling (MCMC) operation is designed to run in O($\bar{d}.\log n$). By running $K.n$ MCMC steps, the total runtime of HRG-Fixed is $O(K.m.\log n)$, feasible on large graphs. The burn-in factor $K$ of HRG-Fixed is set to 1,000 in our evaluation.

\begin{figure}
\centering
\includegraphics[height=1.8in]{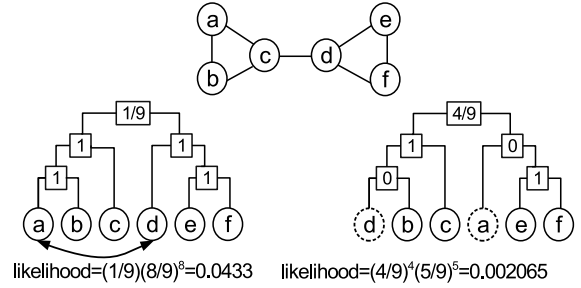}
\setlength{\abovecaptionskip}{-10pt}
\caption{HRG-Fixed}
\vspace{-1.0em}
\label{fig:hrg-fixed}
\end{figure}

\section{Experiments and Results}
\label{sec:eval}

In this section, our evaluation aims to compare the performance of the competitors by clustering quality and efficiency. The clustering quality is measured by the modularity $Q$ and the average $F_1$-score in which the modularity is the most important metric as we aim at highly modular clusterings. The efficiency is measured by the running time. All algorithms are implemented in Java and run on a desktop PC with $Intel^{\circledR}$ Core i7-4770@ 3.4Ghz, 16GB memory. 

Two medium-sized and three large real graphs are used in our experiments \footnote{http://snap.stanford.edu/data/index.html}. \texttt{as20graph} is the graph of routers comprising the internet. \texttt{ca-AstroPh} and \texttt{dblp} are co-authorship networks where two authors are connected if they publish at least one paper together. \texttt{amazon} is a product co-purchasing network where the graph contains an undirected edge from $i$ to $j$ if a product $i$ is frequently co-purchased with product $j$. \texttt{youtube} is a video-sharing web site that includes a social network. Table \ref{tab:dataset} shows the characteristics of the graphs. The columns Com(munities) and Mod(ularity) are the output of Louvain method. The number of samples in each test case is 20.

\begin{table}[htb]
\centering
\caption{Characteristics of the test graphs} \label{tab:dataset}
\begin{tabular}{l|r|r|r|r}
 & Nodes & Edges & Com & Mod  \\ 
\hline
 as20graph & 6,474 & 12,572 & 30 & 0.623 \\ 
\hline
 ca-AstroPh & 17,903 & 196,972 & 37 & 0.624 \\ 
\hline
 amazon & 334,863 & 925,872 & 257 & 0.926 \\ 
\hline
 dblp & 317,080 & 1,049,866 & 375 & 0.818 \\ 
\hline
 youtube & 1,134,890 & 2,987,624 & 13,485 & 0.710 \\ 
\end{tabular}
\end{table}

The schemes are abbreviated as 1K-series (1K), EdgeFlip (EF), Top-m-Filter (TmF), DER, LouvainDP (LDP), ModDivisive (MD), HRG-MCMC and HRG-Fixed.

\subsection{Quality Metrics}
Apart from modularity $Q$, we use $\bar{F_1}$, the \textit{average $F_1$-score}, following the benchmarks in \cite{yang2012defining}. The $F_1$ score of a set $A$ with respect to a set $B$ is defined as the harmonic mean $H$ of the precision and the recall of $A$ against $B$. We define $prec(A,B) = \frac{|A \cap B|}{|A|}$, $recall(A,B) = \frac{|A \cap B|}{|B|}$
\begin{equation}
F_1(A,B) = \frac{2.prec(A,B).recall(A,B)}{prec(A,B) + recall(A,B)}  \nonumber
\end{equation}

Then the average $F_1$ score of two sets of communities $C$ and $C'$ is defined as
\begin{multline*}
F_1(A,C) = \max_{i} F_1(A,c_i), \;\; c_i \in C = \{c_1,..,c_n\} \\   \nonumber
\bar{F_1}(C,C') = \frac{1}{2|C|} \sum_{c_i\in C} F_1(c_i,C') + \frac{1}{2|C'|} \sum_{c_i\in C'} F_1(c_i,C) \nonumber
\end{multline*}

We choose the output clustering of Louvain method as the ground truth for two reasons. First, the evaluation on the real ground truth is already done in \cite{prat2014high} and Louvain method is proven to provide high quality communities. Second, the real ground truth is a set of  \textit{overlap} communities whereas the schemes in this paper output only \textit{non-overlap} communities. The chosen values of $\epsilon$ are \{0.1$\ln n$,0.2$\ln n$,0.3$\ln n$,0.4$\ln n$,0.5$\ln n$\}.

\subsection{Performance of LouvainDP}
We test LouvainDP for the group size $k \in \{4,8,16,32,64\}$. The results on \texttt{youtube} are displayed in Fig. \ref{fig:louvaindp}. We observe a clear separation of two groups $k=4,8$ and $k=16,32,64$. As $k$ increases, the modularity increases faster but also saturates sooner. Similar separations appear in avg.F1Score and the number of communities. Non-trivial modularity scores in several settings of $(k,\epsilon)$ indicate that randomly grouping of nodes and high-pass filtering superedges do not destroy all community structure of the original graph.

At $\epsilon = 0.5\ln n$ and $k = 4,8$, the total edge weight in $G_1$ is very low ( $< 0.05m$), so many supernodes of $G_1$ are disconnected and Louvain method outputs a large number of communities (Fig. \ref{fig:louvaindp-youtube-com}). The reason is that the threshold $\theta$ in LouvainDP is an integral value, so causing abnormal leaps in the total edge weight of $G_1$. We pick $k=8,64$ for the comparative evaluation (Section \ref{subsec:eval-comp}).

\begin{figure*}
 	\centering
         \begin{subfigure}[b]{0.32\textwidth}
                 \centering
                 \includegraphics[height=1.6in]{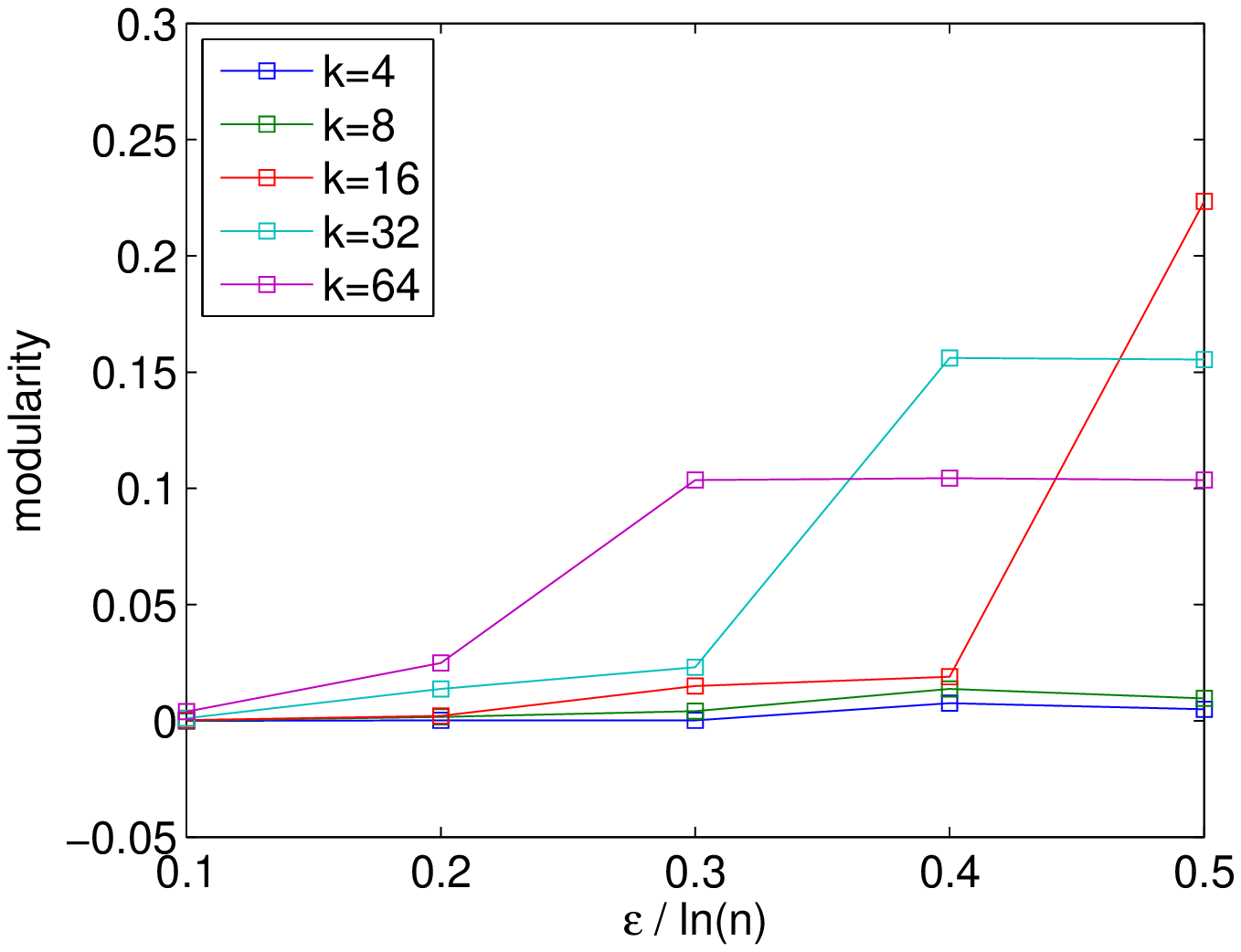}
                 \caption{}
                 \label{fig:louvaindp-youtube-mod}
         \end{subfigure}
         \hfill
         \begin{subfigure}[b]{0.32\textwidth}
                 \centering
                 \includegraphics[height=1.6in]{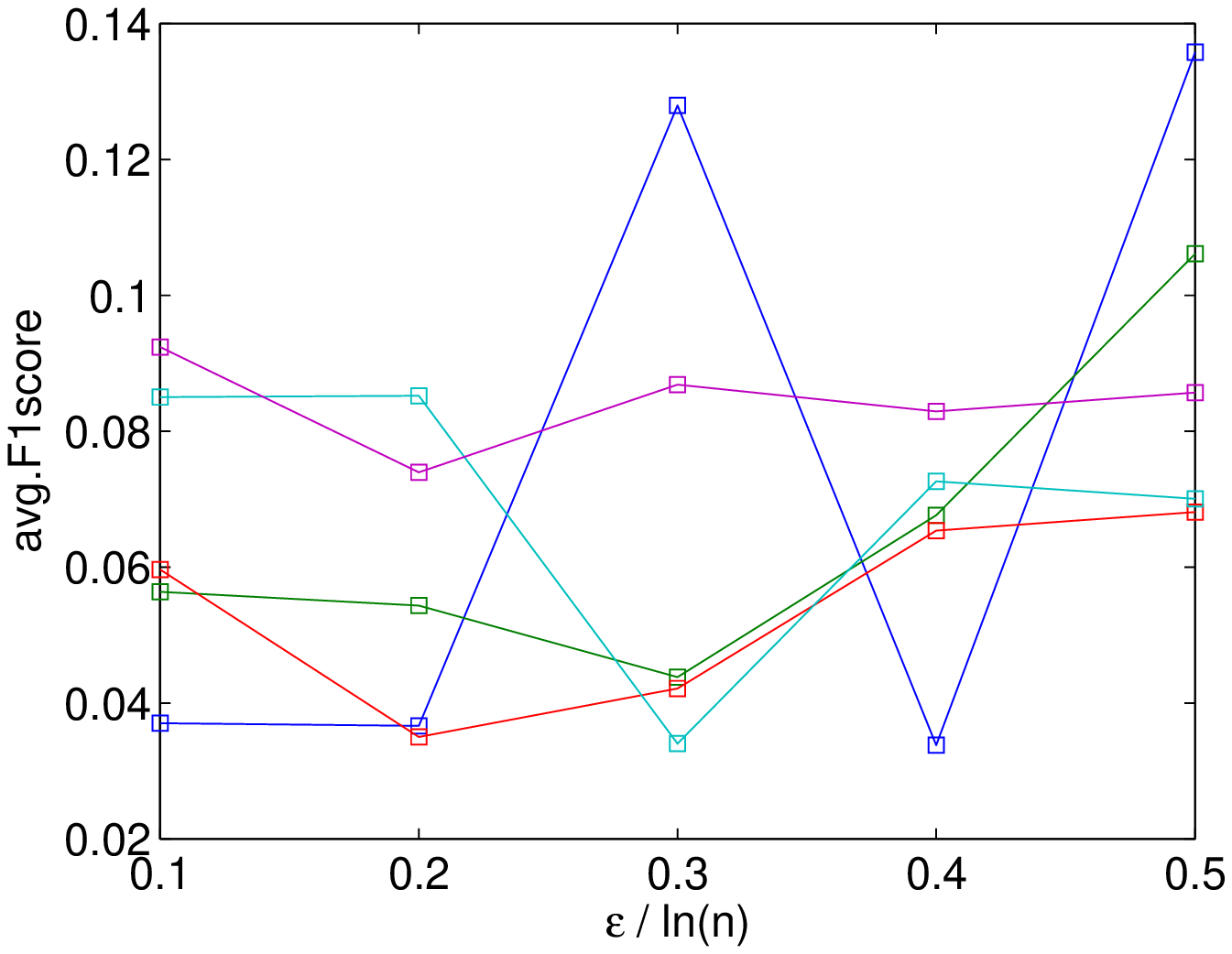}
                 \caption{}
                 \label{fig:louvaindp-youtube-f1}
         \end{subfigure}
         \hfill
         \begin{subfigure}[b]{0.32\textwidth}
                 \centering
                 \includegraphics[height=1.6in]{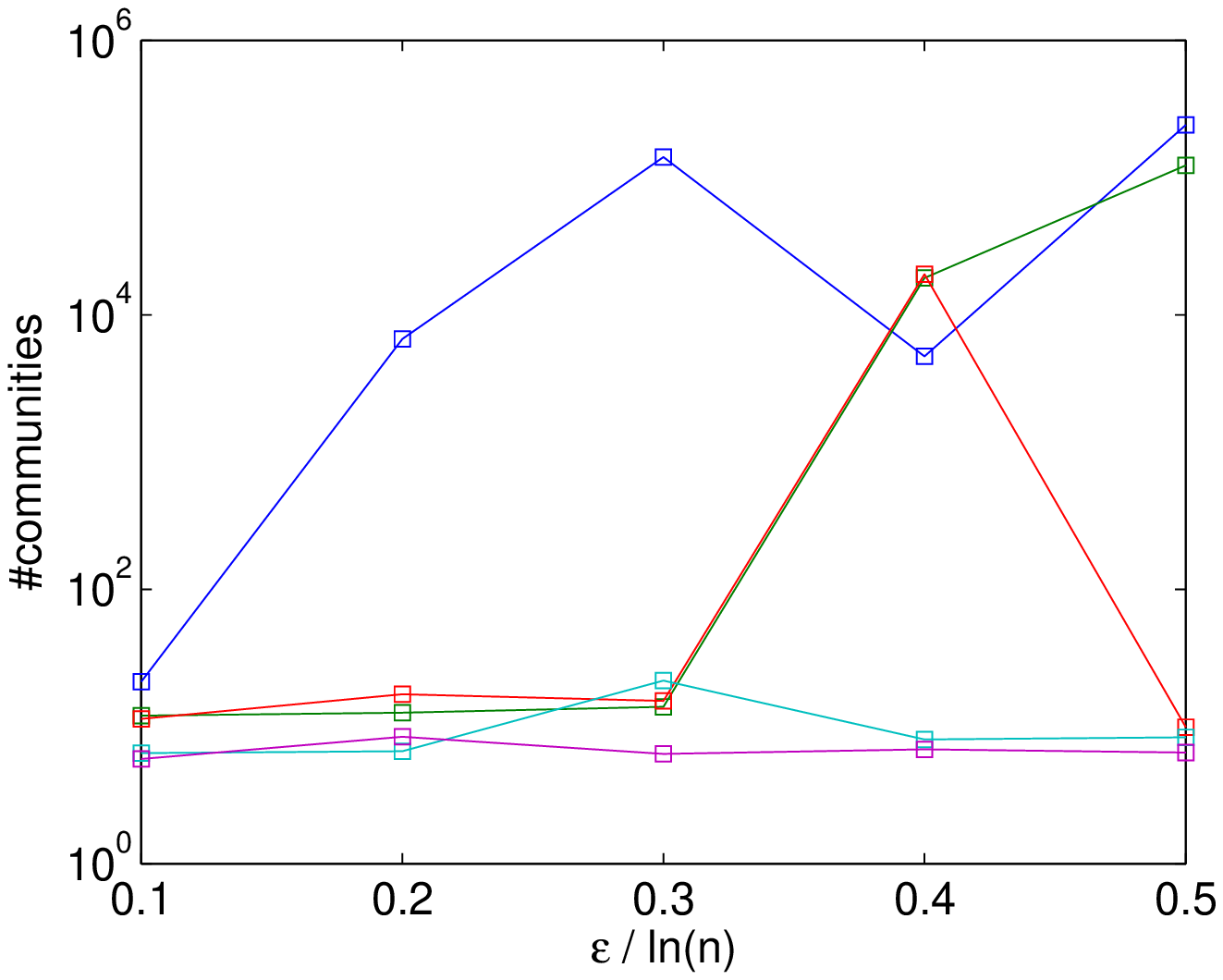}
                 \caption{}
                 \label{fig:louvaindp-youtube-com}
          \end{subfigure}        	
     \caption{LouvainDP on \texttt{youtube} (0.5$\ln n$ = 7.0) }
     \label{fig:louvaindp}
 \end{figure*}

\subsection{Performance of ModDivisive}
\label{subsec:eval-div}

The effectiveness of ModDivisive is illustrated in Fig. \ref{fig:moddivisive} for graph \texttt{youtube} and $\lambda = 2.0$, $K =50$. We select six pairs of $(k,maxL)$ by the set \{(2,10),(3,7),(4,5), (5,4),(6,4),(10,3)\}. Modularity increases steadily with $\epsilon$ while it is not always the case for avg.F1Score.
The number of communities in the best cut is shown in Fig. \ref{fig:moddivisive-youtube-com}. Clearly, the small number of communities indicates that ModDivisive's best cut is not far from the root. The reason is the use of $\lambda = 2.0$, i.e. half of privacy budget is reserved to the first level, the half of the rest for the second level and so on. Lower levels receive geometrically smaller privacy budgets, so their partitions get poorer results. 

\begin{figure*}
 	\centering
         \begin{subfigure}[h]{0.32\textwidth}
                 \centering
                 \includegraphics[height=1.6in]{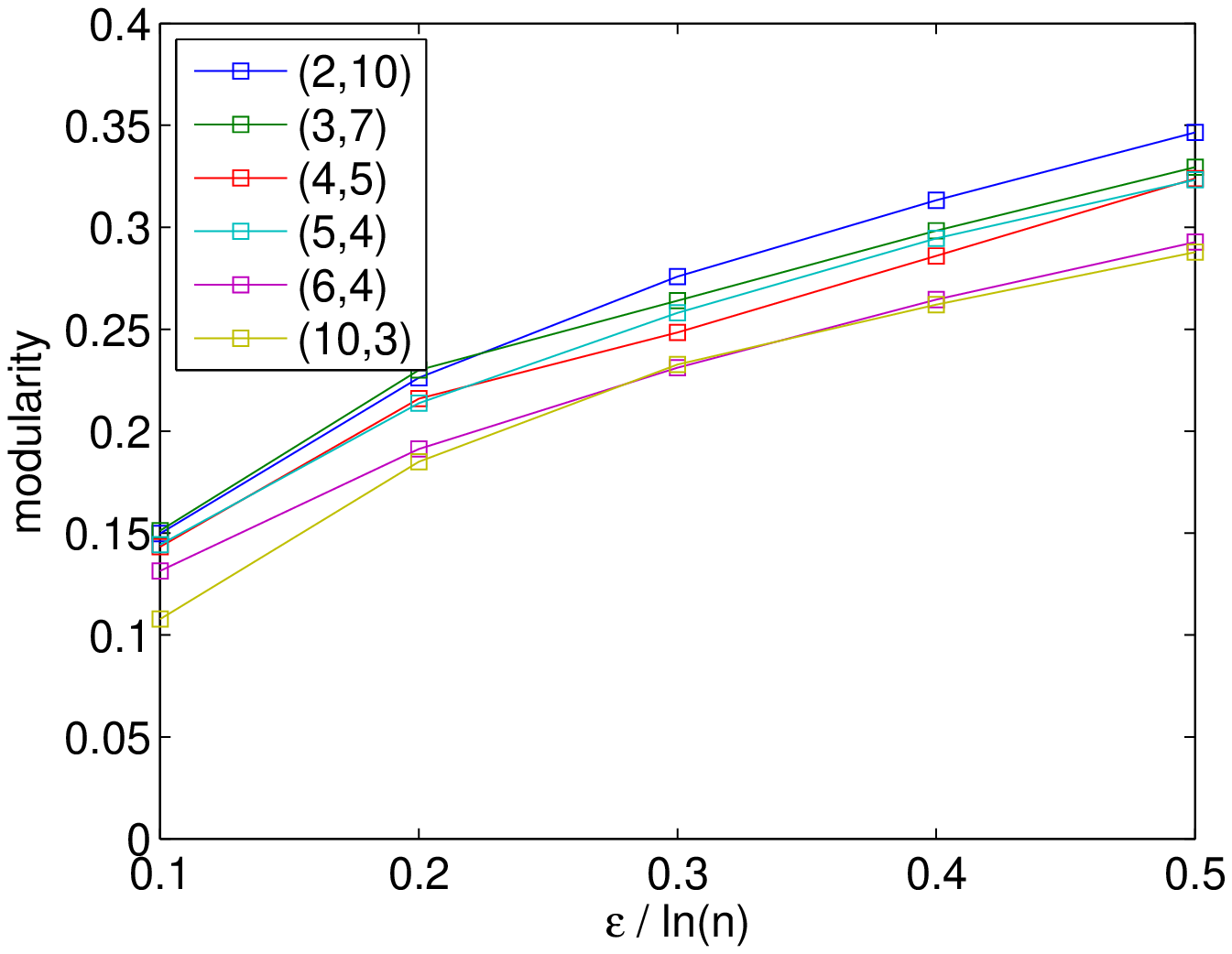}
                 \caption{}	
                 \label{fig:moddivisive-youtube-mod}
         \end{subfigure}
         \hfill
         \begin{subfigure}[h]{0.32\textwidth}
                 \centering
                 \includegraphics[height=1.6in]{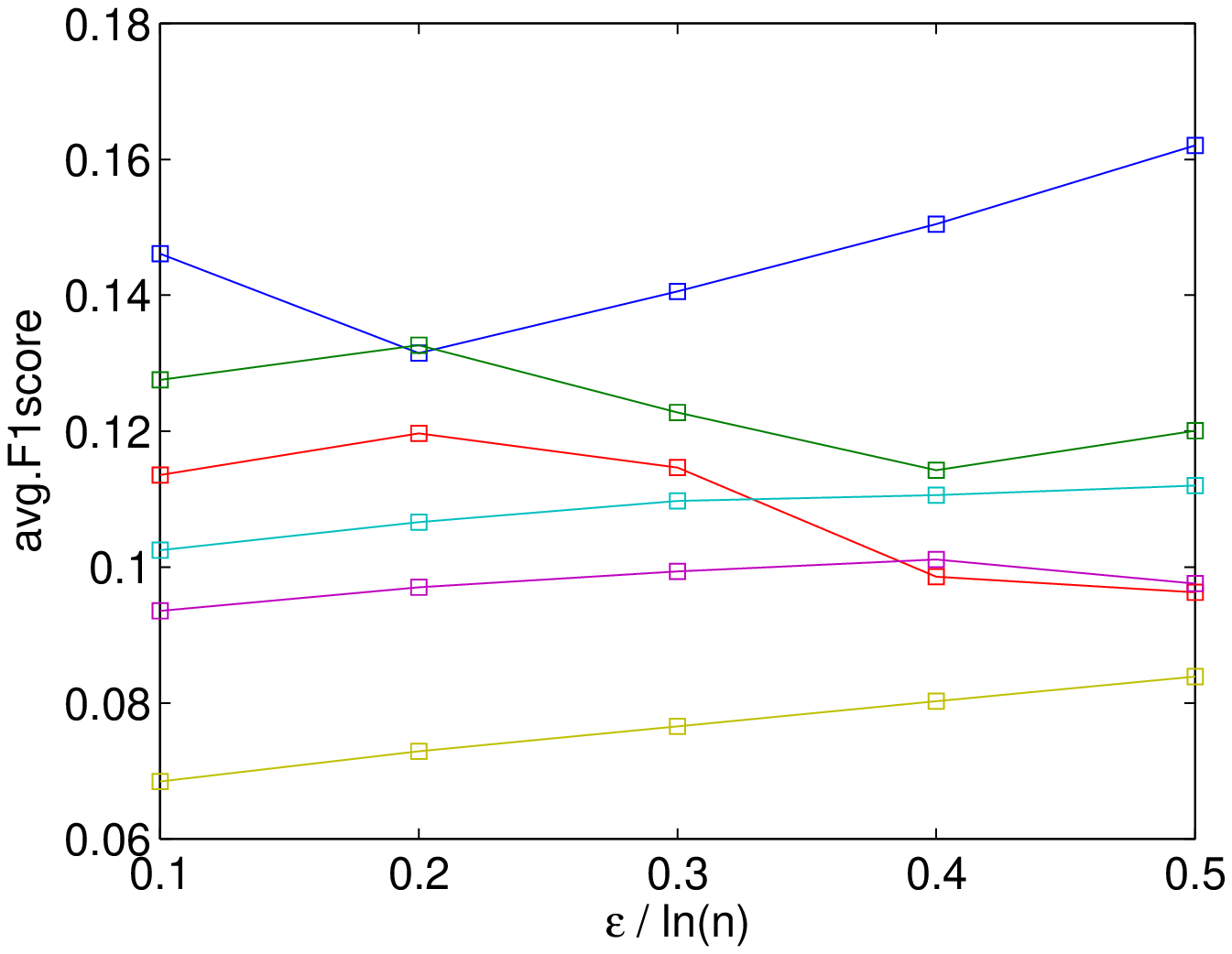}
                 \caption{}
                 \label{fig:moddivisive-youtube-f1}
         \end{subfigure}
         \hfill         
         \begin{subfigure}[h]{0.32\textwidth}
                  \centering
                  \includegraphics[height=1.6in]{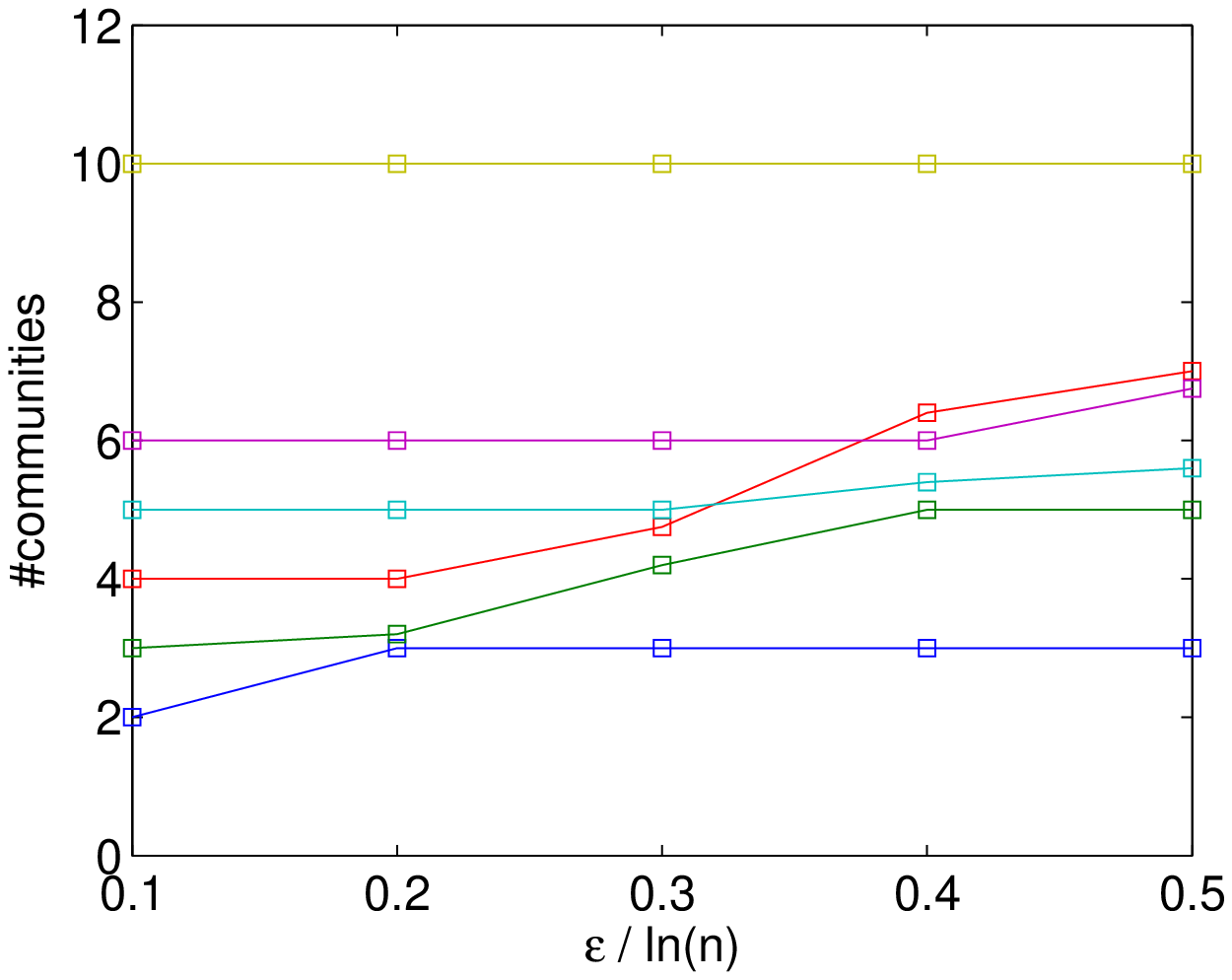}
                 \caption{}
                  \label{fig:moddivisive-youtube-com}
          \end{subfigure}        	
     \caption{ModDivisive on \texttt{youtube} with $\lambda = 2.0$, $K=50$ (0.5$\ln n$ = 7.0)}
     \label{fig:moddivisive}
\end{figure*}

 We choose $\lambda = 2.0$ to obtain a good allocation of $\epsilon$ among the levels. Fig. \ref{fig:ratio-amazon} shows the modularity for different values of $\lambda$. Note that $\lambda = 1.0$ means $\epsilon$ is equally allocated to the $maxL$ levels. By building a k-ary tree, we reduce considerably the size of the state space $\mathcal{P}$ for MCMC. As a result, we need only a small burn-in factor $K$. Looking at Fig. \ref{fig:burnfactor-amazon}, we see that larger $K = 100$ results in only tiny increase of modularity in comparison with that of $K = 50$.
 
\begin{figure}
	\centering
        \begin{subfigure}[b]{0.22\textwidth}
                \centering
                \includegraphics[height=1.4in]{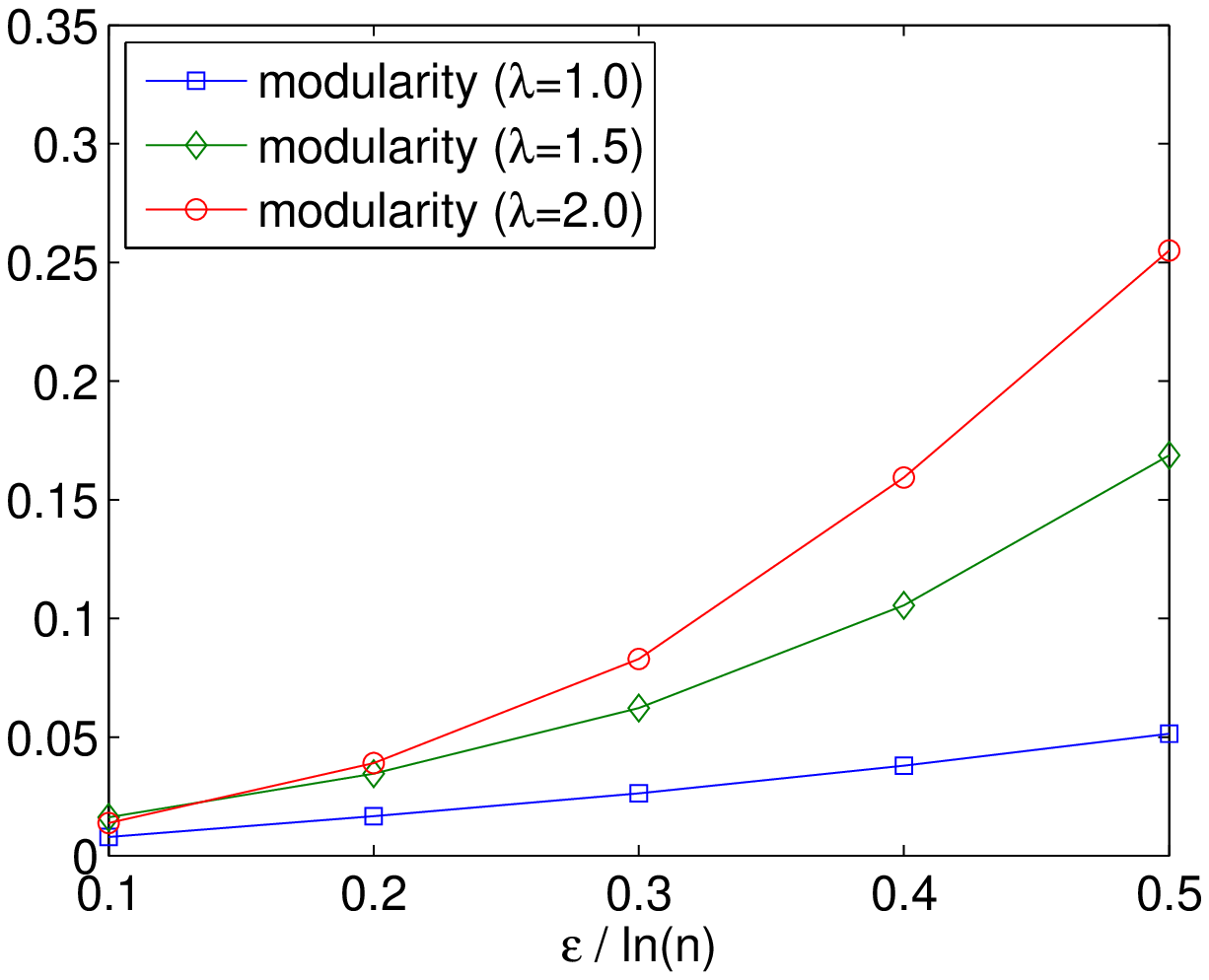}
                \setlength{\abovecaptionskip}{-10pt}
                \caption{$\lambda$}	
                \label{fig:ratio-amazon}
        \end{subfigure}
        \hfill
        \begin{subfigure}[b]{0.22\textwidth}
                \centering
                \includegraphics[height=1.4in]{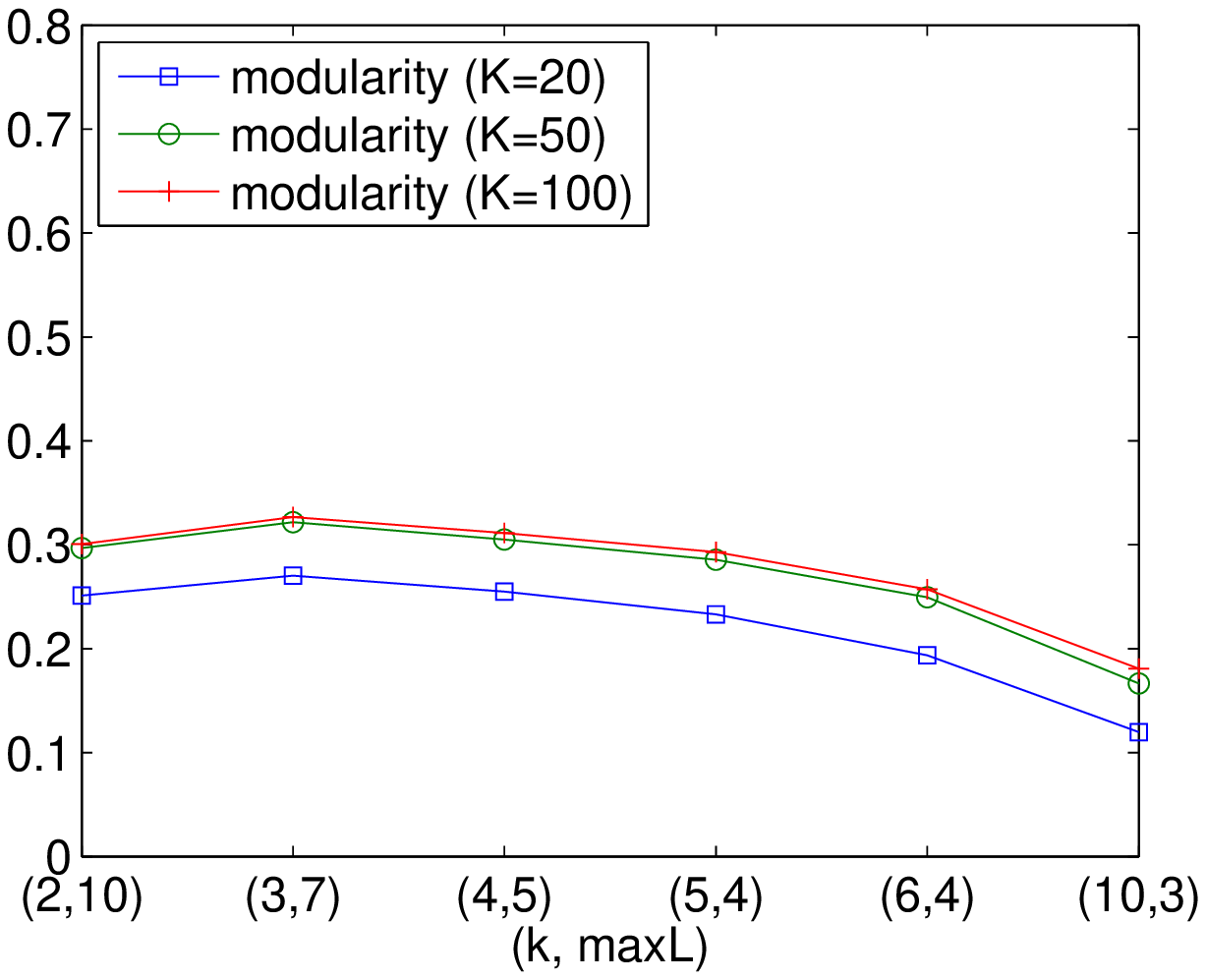}
                \setlength{\abovecaptionskip}{-10pt}
                \caption{$K$ (at $\epsilon$ = 0.5$\ln n$)}	
                \label{fig:burnfactor-amazon}
        \end{subfigure}               
    \caption{ModDivisive: modularity vs. $\lambda$ and $K$ on \texttt{amazon}}
    \label{fig:moddiv-ratio-burnfactor}
\end{figure}

\subsection{Comparative Evaluation}
\label{subsec:eval-comp}
In this section, we report a comparative evaluation of LouvainDP and ModDivisive against the competitors in Figures \ref{fig:all-as20graph}, \ref{fig:all-caAstroPh}, \ref{fig:all-amazon}, \ref{fig:all-dblp} and \ref{fig:all-youtube}. The dashed lines in subfigures \ref{fig:all-as20graph-com}, \ref{fig:all-caAstroPh-com}, \ref{fig:all-amazon-com}, \ref{fig:all-dblp-com} and \ref{fig:all-youtube-com} represent the ground-truth number of communities by Louvain method. ModDivisive performs best in most of the cases. 

On \texttt{as20graph} and \texttt{ca-AstroPh}, HRG-MCMC outputs the whole nodeset $V$ with the zero modularity while 1K-series, TmF, DER also give useless clusterings. EdgeFlip produces good quality metrics exclusively on \texttt{ca-AstroPh} while LouvainDP returns the highest modularity scores on \texttt{as20graph}. However, the inherent quadratic complexity of EdgeFlip makes Louvain method fail at $\epsilon = 0.1\ln n$ and $0.2\ln n$ for \texttt{ca-AstroPh} graph.

On the three large graphs, ModDivisive dominates the other schemes by a large margin in modularity and avg.F1Score. LouvainDP is the second best in modularity at $k=64$. 
Our proposed HRG-Fixed is consistent with $\epsilon$ and has good performance on \texttt{dblp} and \texttt{youtube}. Note that HRG-MCMC is infeasible on the three large graphs due to its quadratic complexity. Again, 1K-series, TmF and EdgeFlipShrink provide the worst quality scores with the exception of 1K-series's avg.F1Score on \texttt{youtube}.

The runtime of the linear schemes is reported in Fig. \ref{fig:runtime}. EdgeFlipShrink, 1K-series, TmF and LouvainDP benefit greatly by running Louvain method on the noisy output graph $\tilde{G}$. ModDivisive and HRG-Fixed also finish their work quickly in $O(K.n.maxL)$ and $O(K.m.\log n)$ respectively.

\begin{figure}
	\centering
        \begin{subfigure}[b]{0.22\textwidth}
                \centering
                \includegraphics[height=1.4in]{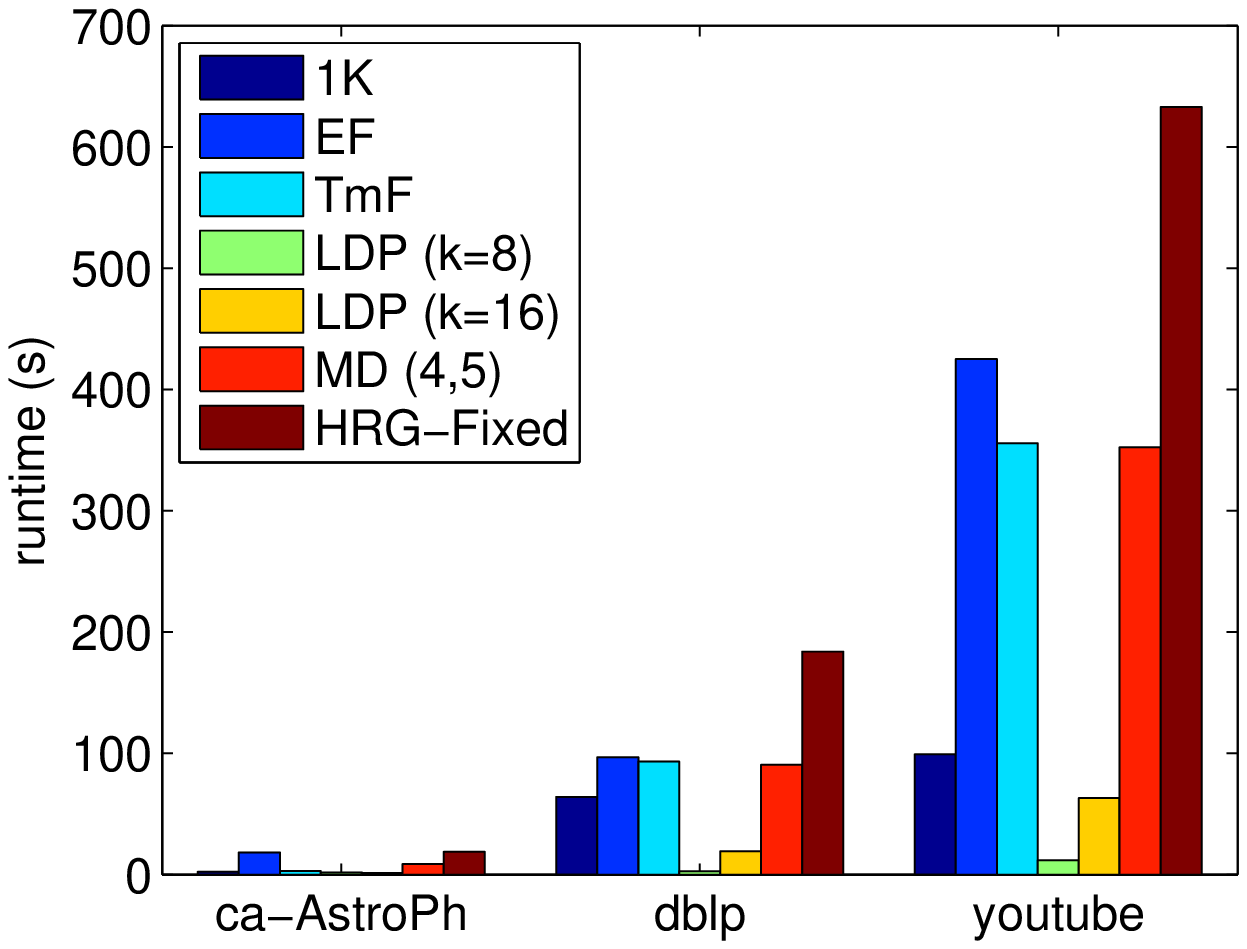}
                \setlength{\abovecaptionskip}{-10pt}
                \caption{All schemes ($\epsilon$ = 0.5$\ln n$)}	
                \label{fig:runtime-all}
        \end{subfigure}
        \hfill
        \begin{subfigure}[b]{0.22\textwidth}
                \centering
                \includegraphics[height=1.4in]{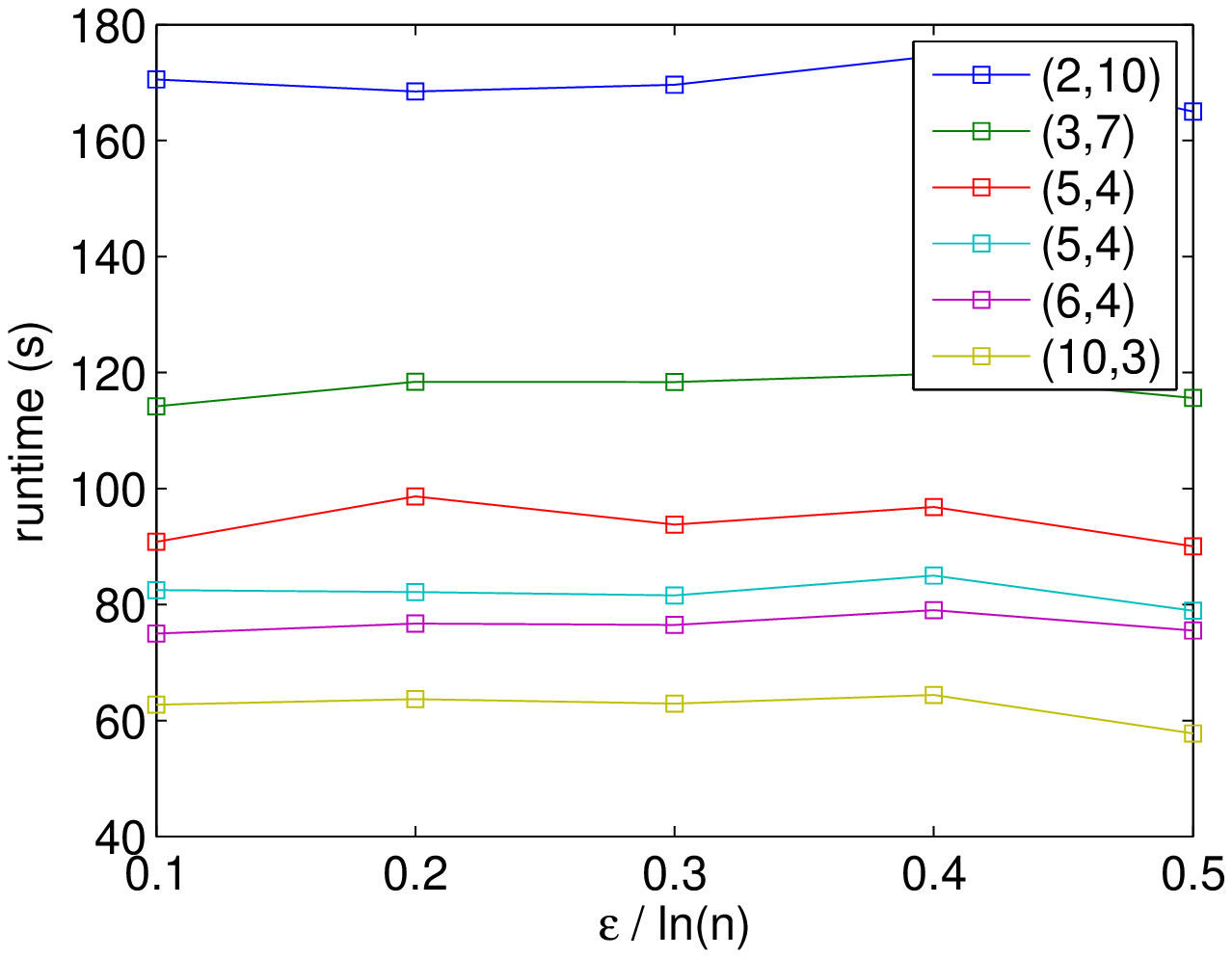}
                \setlength{\abovecaptionskip}{-10pt}
                \caption{ModDivisive on \texttt{amazon}}	
                \label{fig:runtime-moddiv-amazon}
        \end{subfigure}               
    \caption{Runtime}
    \label{fig:runtime}
\end{figure}

\section{Conclusion}
\label{sec:conclusion}
We have given a big picture of the problem $\epsilon$-DP community detection within the two categories: input and algorithm perturbation. We analyzed the major challenges of community detection under differential privacy. We explained why techniques borrowed from k-Means fail and how the difficulty of $\epsilon$-DP recommender systems enables a relaxation of privacy budget. We proposed LouvainDP and ModDivisive as the representatives of input and algorithm perturbations respectively. By conducting a comprehensive evaluation, we revealed the advantages of our methods. ModDivisive steadily gives the best modularity and avg.F1Score on large graphs while LouvainDP outperforms the remaining input perturbation competitors in certain settings. HRG-MCMC/HRG-Fixed give low modularity clusterings, indicating the limitation of the HRG model in divisive CD. The input perturbation schemes DER, EF, 1K-series and TmF hardly deliver any good node clustering except EF on the two medium-sized graphs. 

For future work, we plan to develop an $\epsilon$-DP agglomerative scheme based on Louvain method and extend our work for directed graphs and overlapping community detection under differential privacy.

\begin{figure*}
  	\centering
          \begin{subfigure}[t]{0.32\textwidth}
                  \centering
                  \includegraphics[height=1.6in]{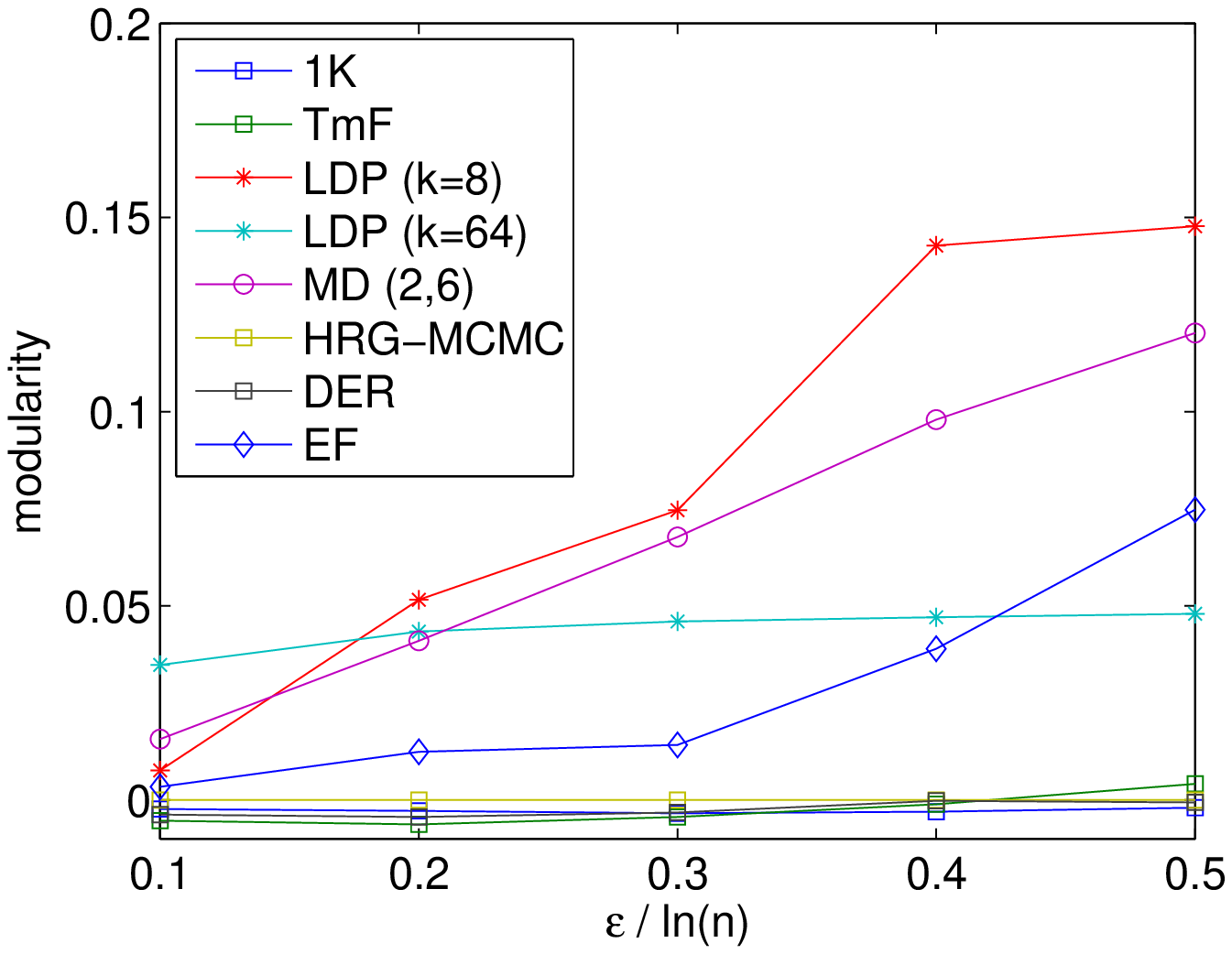}
                  \caption{}	
                  \label{fig:all-as20graph-mod}
          \end{subfigure}
          \hfill
          \begin{subfigure}[t]{0.32\textwidth}
                  \centering
                  \includegraphics[height=1.6in]{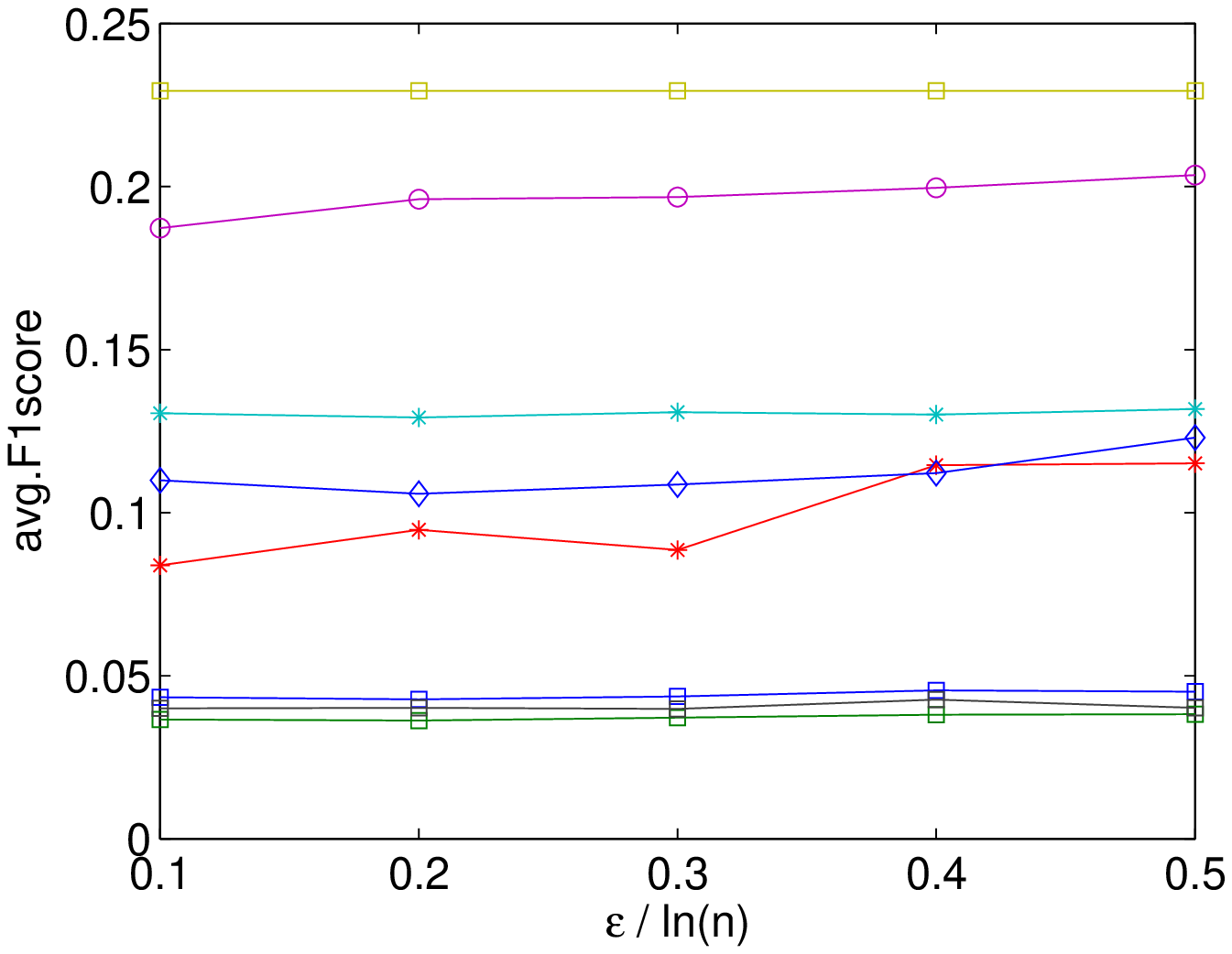}
                  \caption{}
                  \label{fig:all-as20graph-f1}
          \end{subfigure}
          \hfill         
          \begin{subfigure}[t]{0.32\textwidth}
                   \centering
                   \includegraphics[height=1.6in]{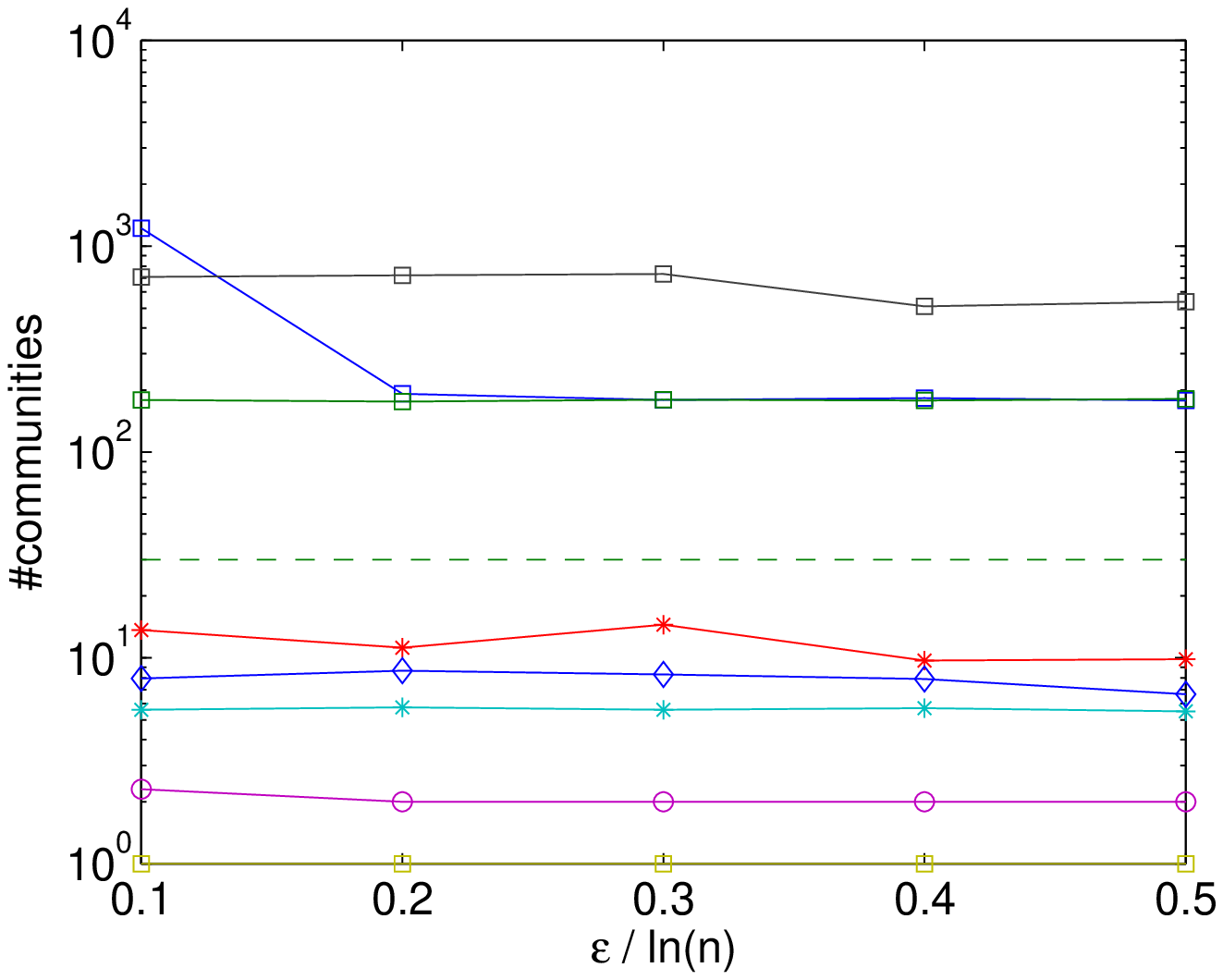}
                  \caption{}
                   \label{fig:all-as20graph-com}
           \end{subfigure}        	
      \caption{Quality metrics and the number of communities (\texttt{as20graph}) (0.5$\ln n$ = 4.4)}
      \label{fig:all-as20graph}
  \end{figure*}
  
\begin{figure*}
 	\centering
         \begin{subfigure}[t]{0.32\textwidth}
                 \centering
                 \includegraphics[height=1.6in]{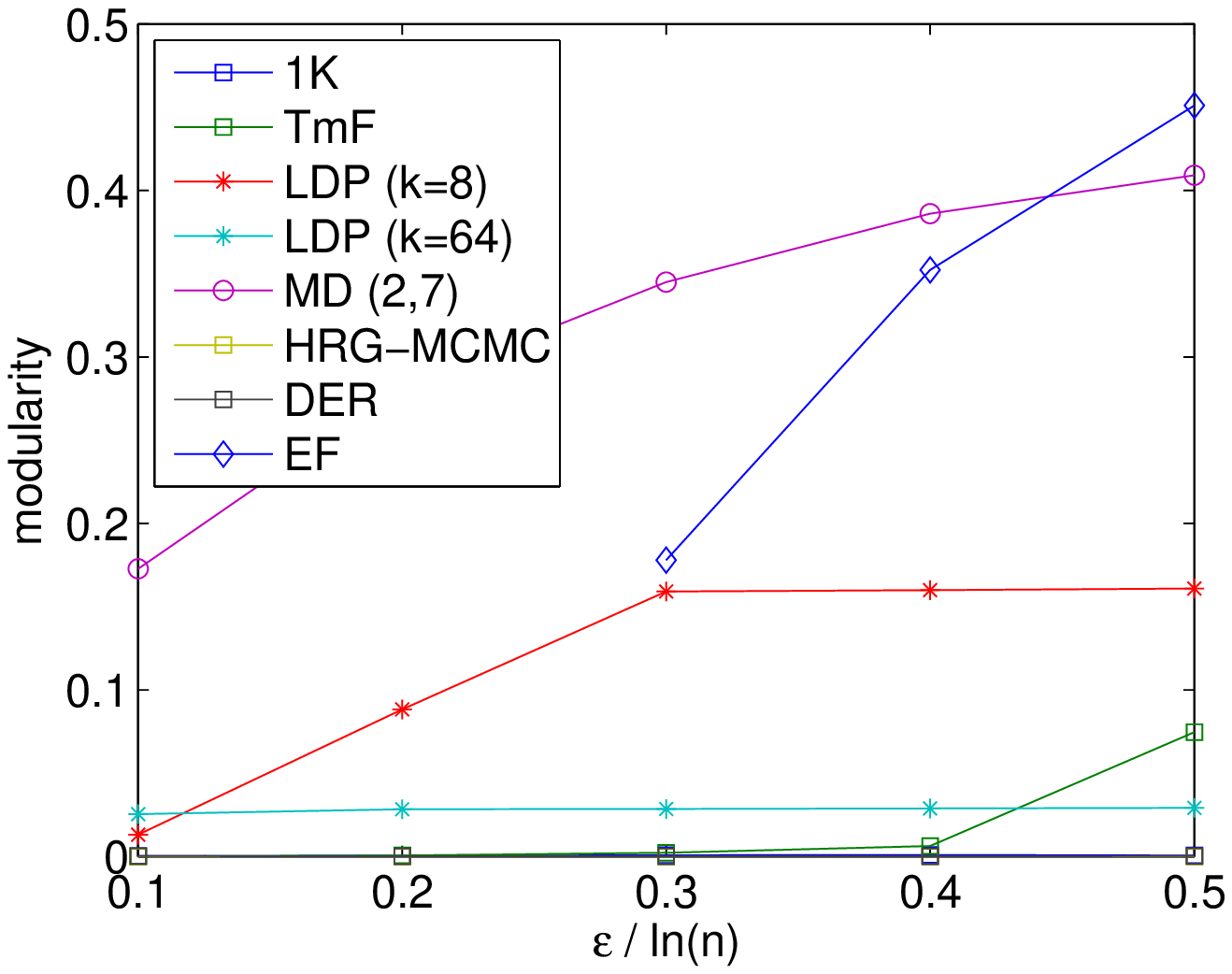}
                 \caption{}	
                 \label{fig:all-caAstroPh-mod}
         \end{subfigure}
         \hfill
         \begin{subfigure}[t]{0.32\textwidth}
                 \centering
                 \includegraphics[height=1.6in]{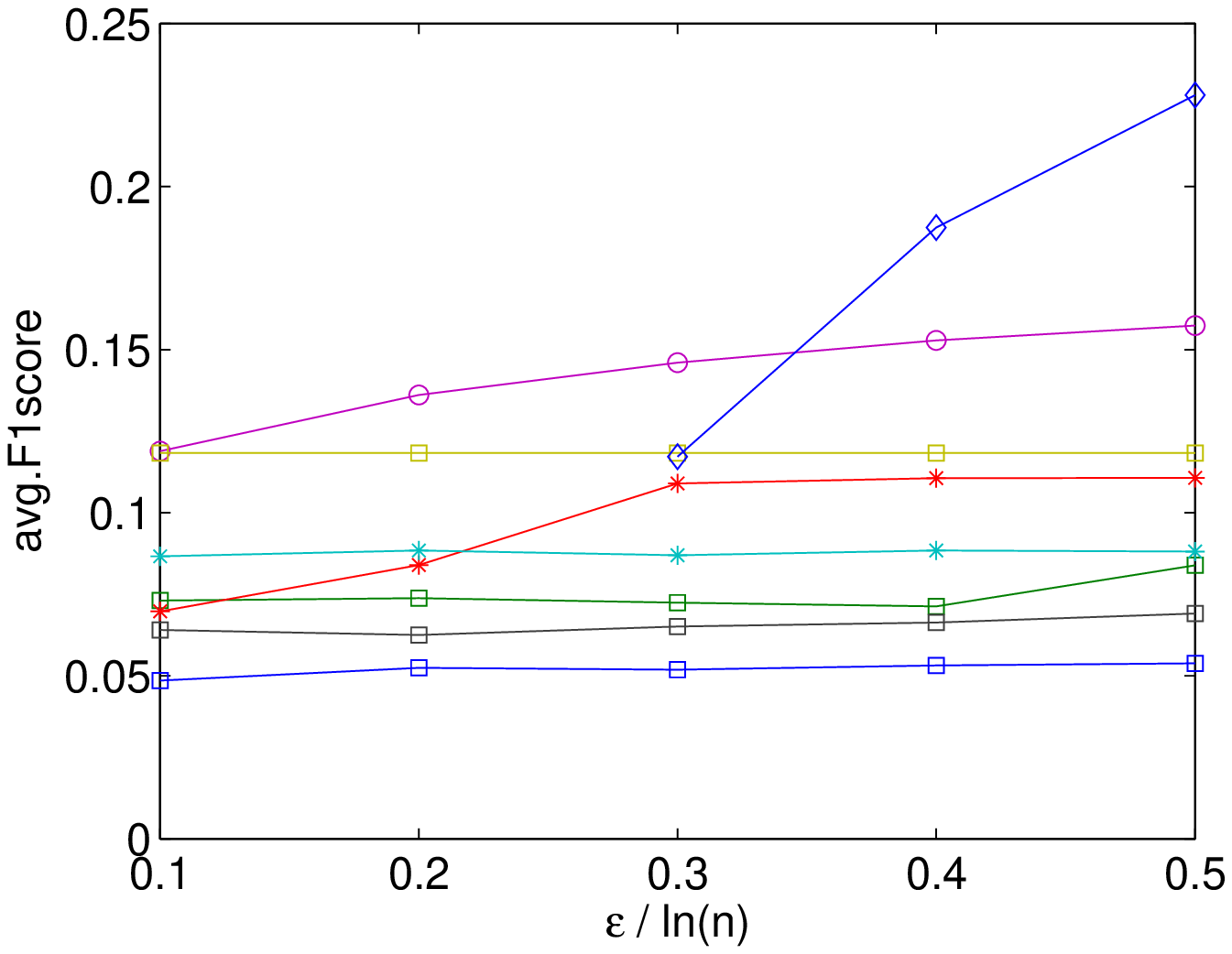}
                 \caption{}
                 \label{fig:all-caAstroPh-f1}
         \end{subfigure}
         \hfill         
         \begin{subfigure}[t]{0.32\textwidth}
                  \centering
                  \includegraphics[height=1.6in]{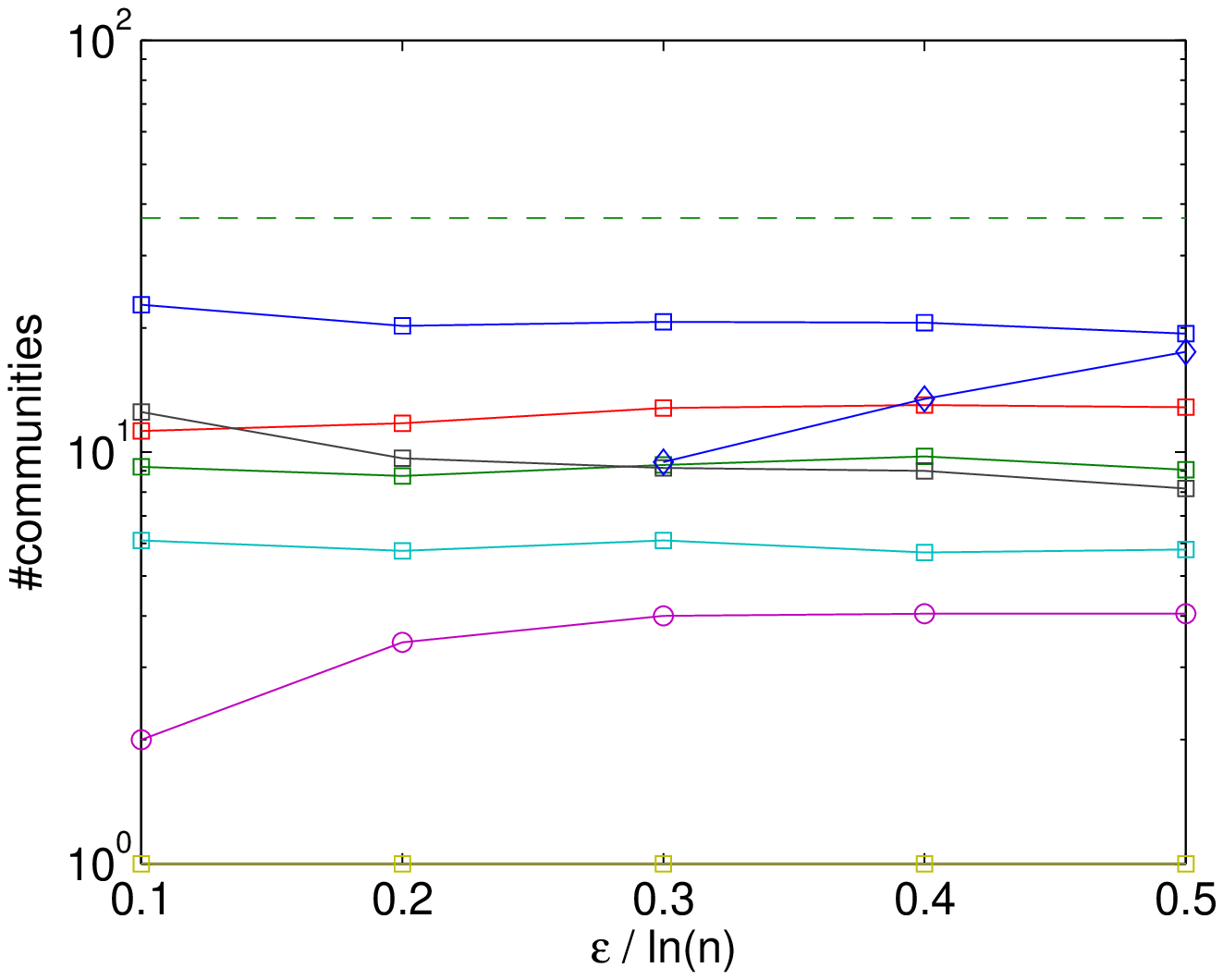}
                 \caption{}
                  \label{fig:all-caAstroPh-com}
          \end{subfigure}        	
     \caption{Quality metrics and the number of communities (\texttt{ca-AstroPh}) (0.5$\ln n$ = 4.9)}
     \label{fig:all-caAstroPh}
 \end{figure*}

\begin{figure*}
 	\centering
         \begin{subfigure}[t]{0.32\textwidth}
                 \centering
                 \includegraphics[height=1.6in]{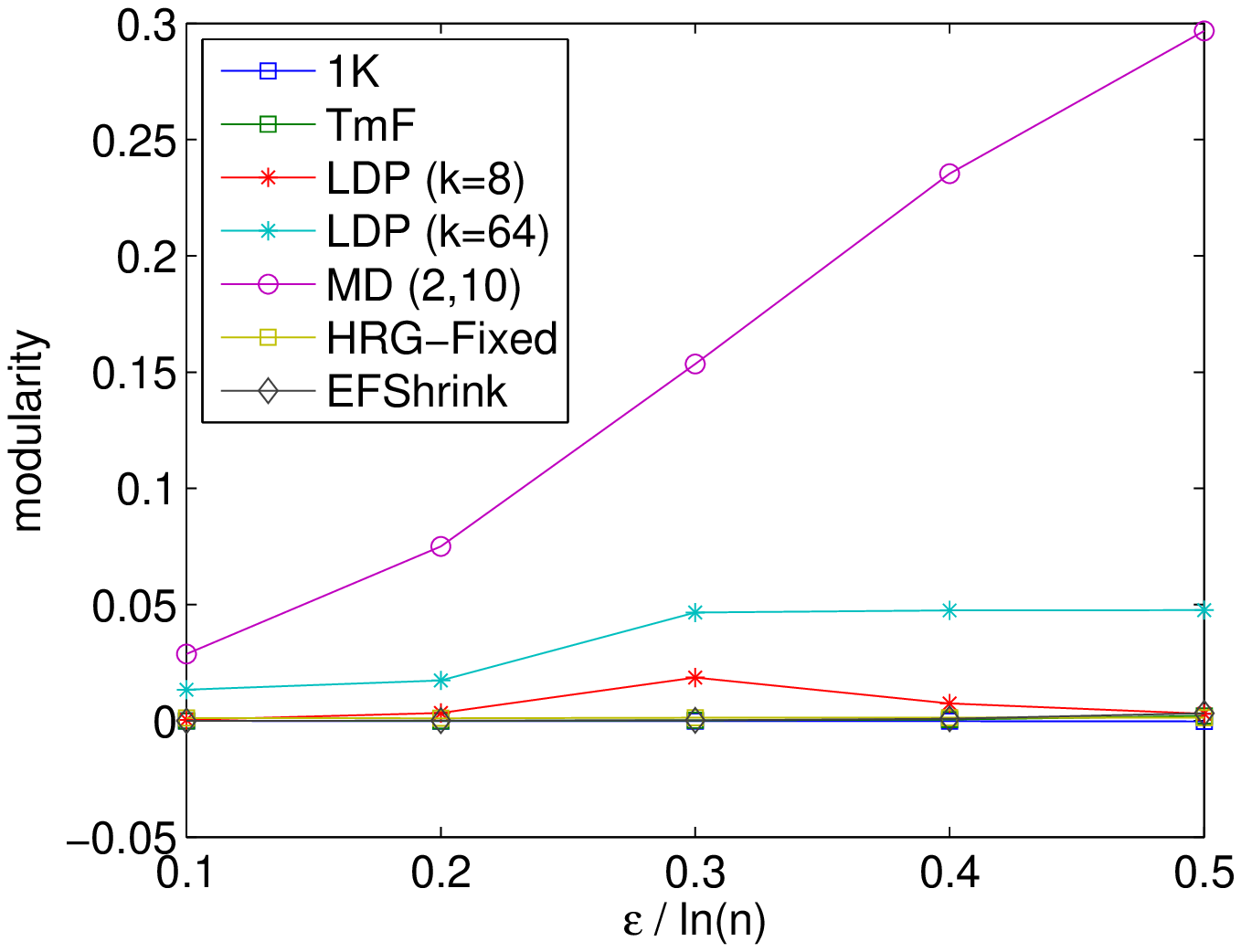}
                 \caption{}	
                 \label{fig:all-amazon-mod}
         \end{subfigure}
         \hfill
         \begin{subfigure}[t]{0.32\textwidth}
                 \centering
                 \includegraphics[height=1.6in]{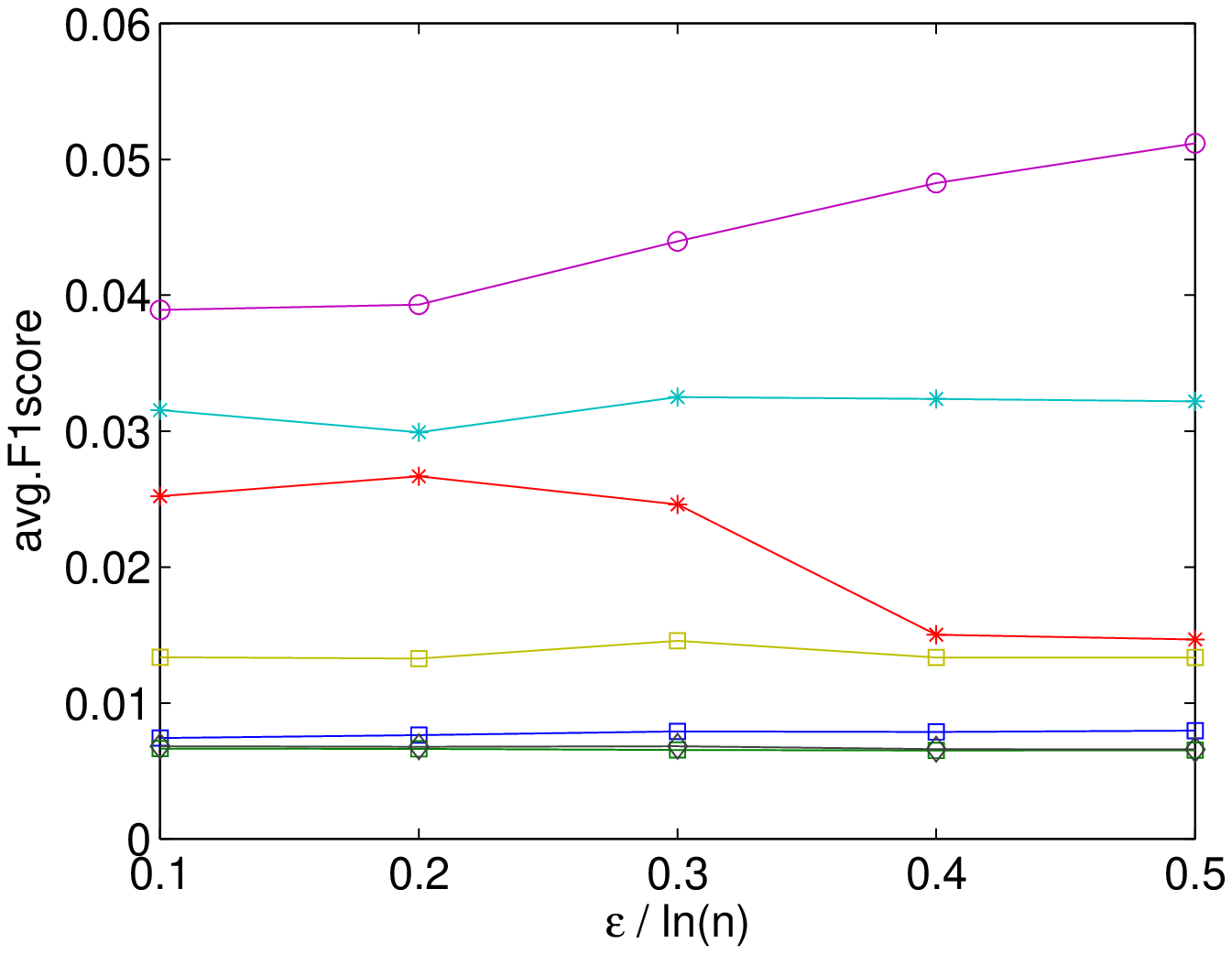}
                 \caption{}
                 \label{fig:all-amazon-f1}
         \end{subfigure}
         \hfill         
         \begin{subfigure}[t]{0.32\textwidth}
                  \centering
                  \includegraphics[height=1.6in]{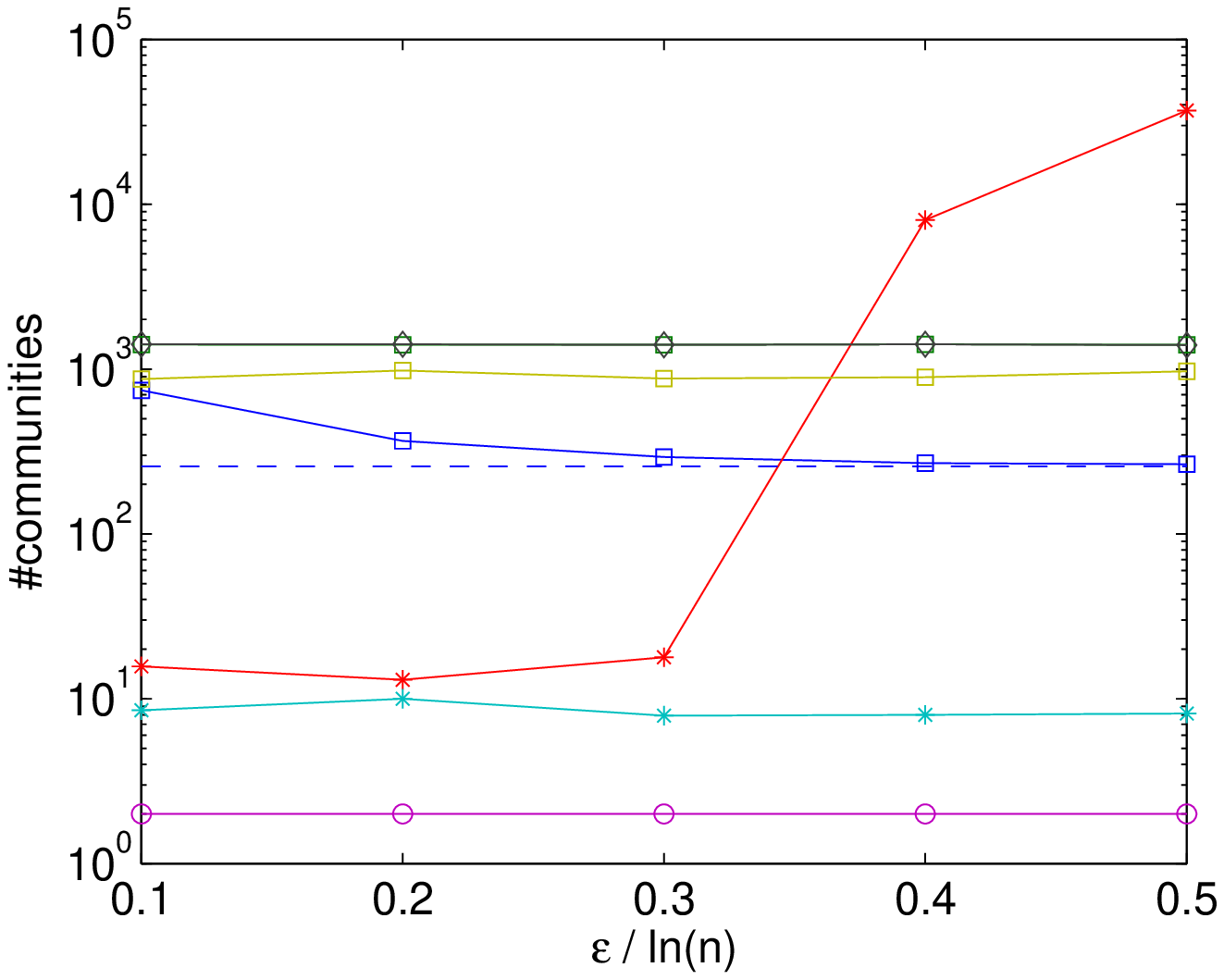}
                 \caption{}
                  \label{fig:all-amazon-com}
          \end{subfigure}        	
     \caption{Quality metrics and the number of communities (\texttt{amazon}) (0.5$\ln n$ = 6.4)}
     \label{fig:all-amazon}
 \end{figure*}

\begin{figure*}
 	\centering
         \begin{subfigure}[t]{0.32\textwidth}
                 \centering
                 \includegraphics[height=1.6in]{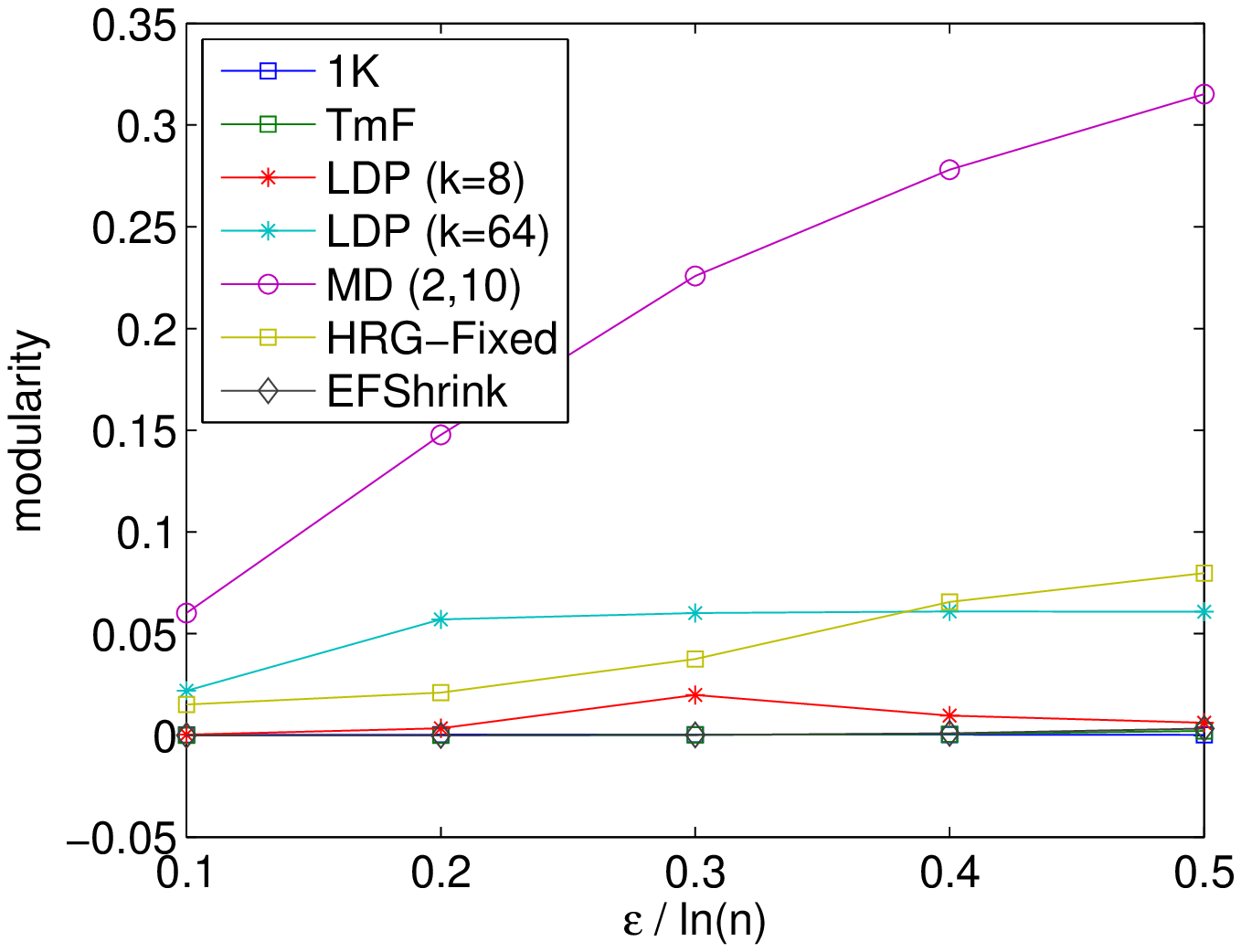}
                 \caption{}
                 \label{fig:all-dblp-mod}
         \end{subfigure}
         \hfill
         \begin{subfigure}[t]{0.32\textwidth}
                 \centering
                 \includegraphics[height=1.6in]{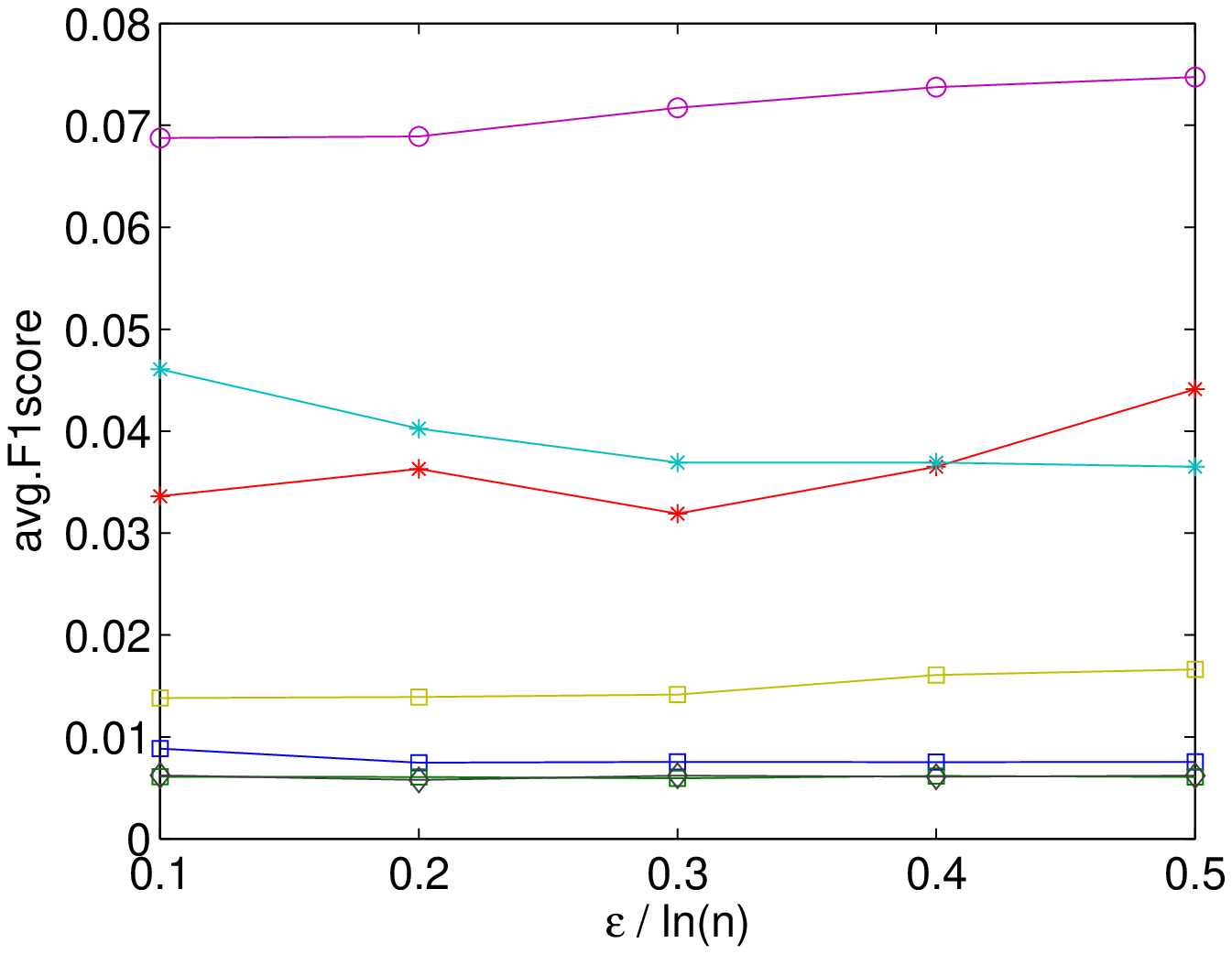}
                 \caption{}
                 \label{fig:all-dblp-f1}
         \end{subfigure}
         \hfill         
         \begin{subfigure}[t]{0.32\textwidth}
                  \centering
                  \includegraphics[height=1.6in]{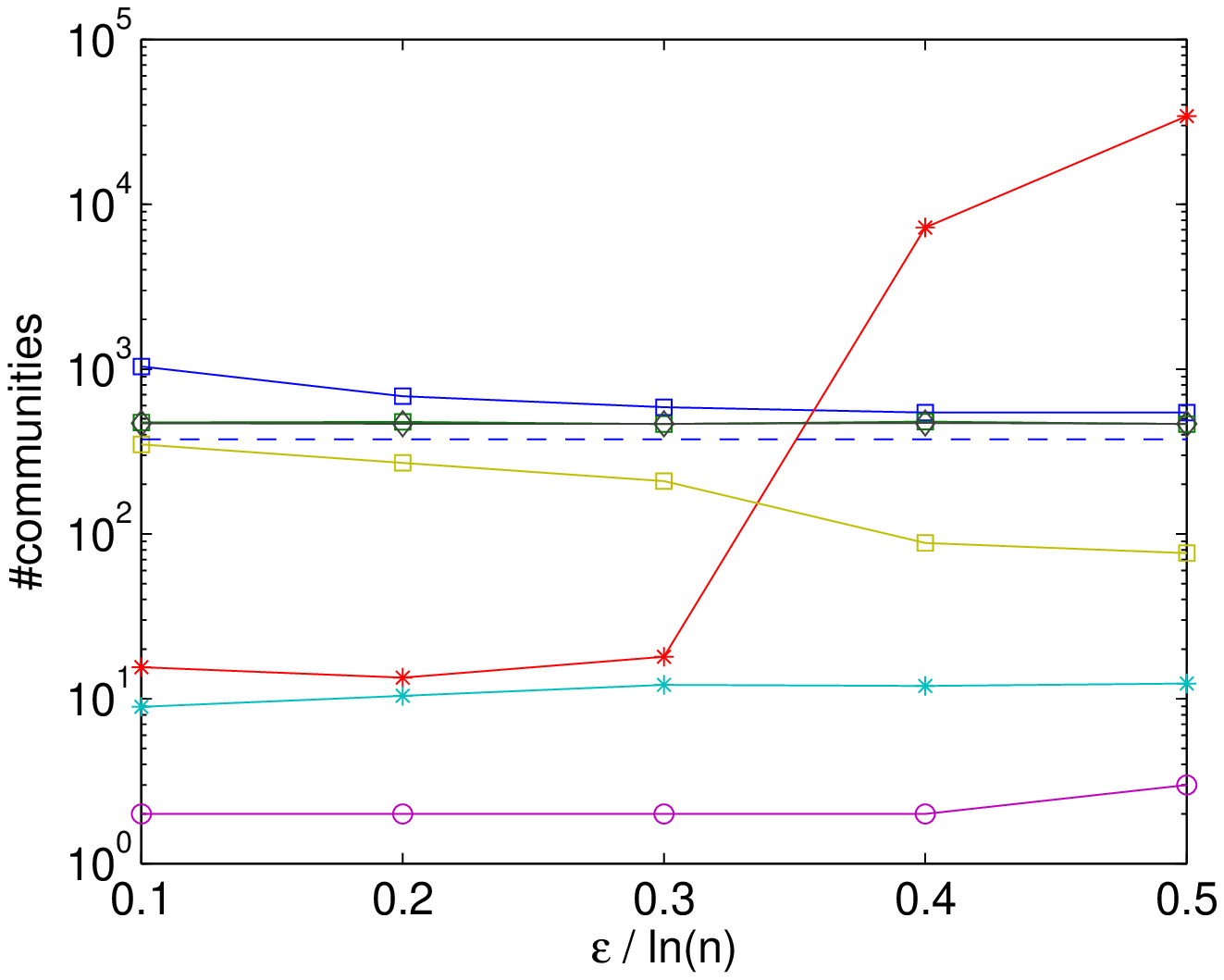}
                 \caption{}
                  \label{fig:all-dblp-com}
          \end{subfigure}        	
     \caption{Quality metrics and the number of communities (\texttt{dblp}) (0.5$\ln n$ = 6.3)}
     \label{fig:all-dblp}
 \end{figure*}

\begin{figure*}
 	\centering
         \begin{subfigure}[t]{0.32\textwidth}
                 \centering
                 \includegraphics[height=1.6in]{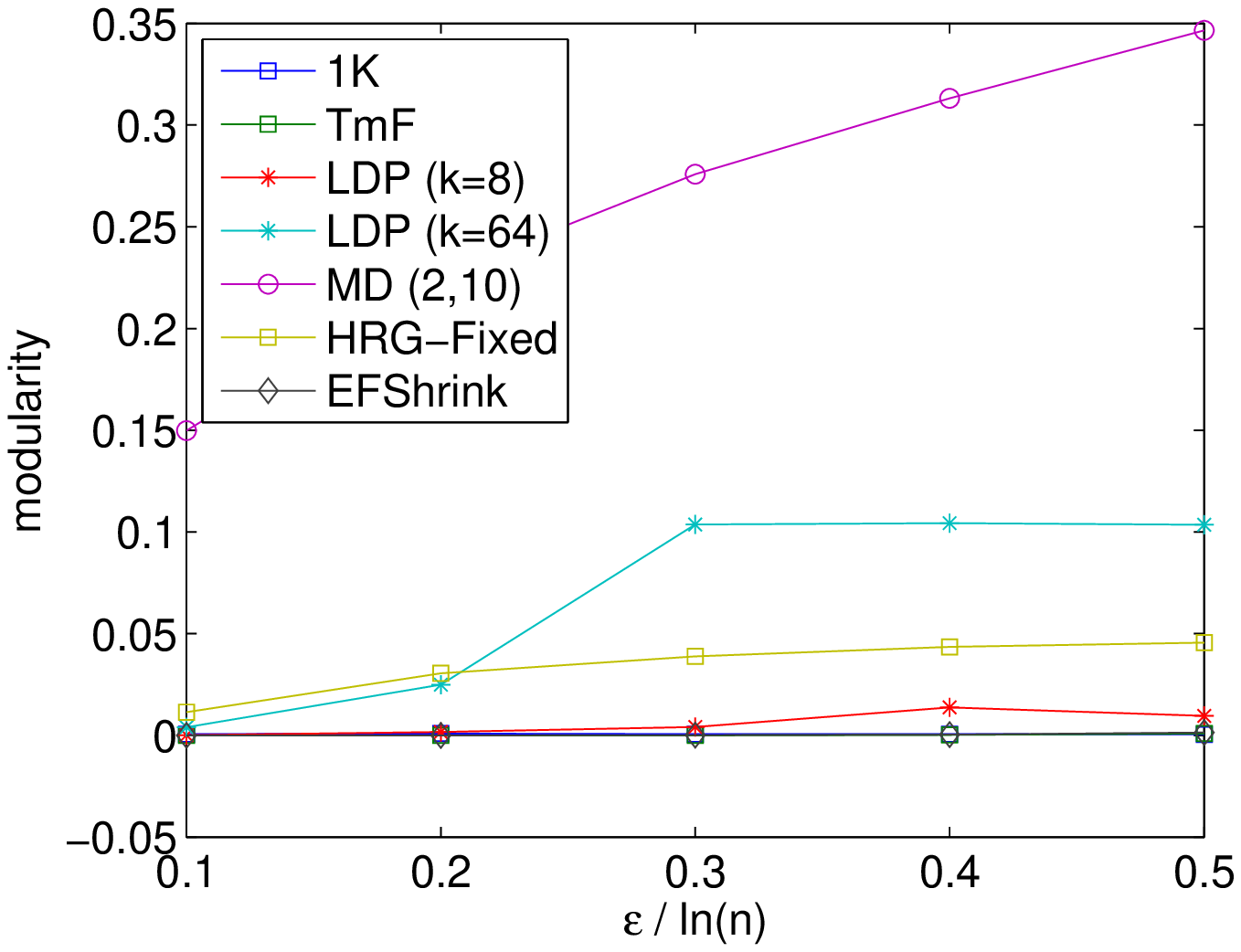}
                 \caption{}	
                 \label{fig:all-youtube-mod}
         \end{subfigure}
         \hfill
         \begin{subfigure}[t]{0.32\textwidth}
                 \centering
                 \includegraphics[height=1.6in]{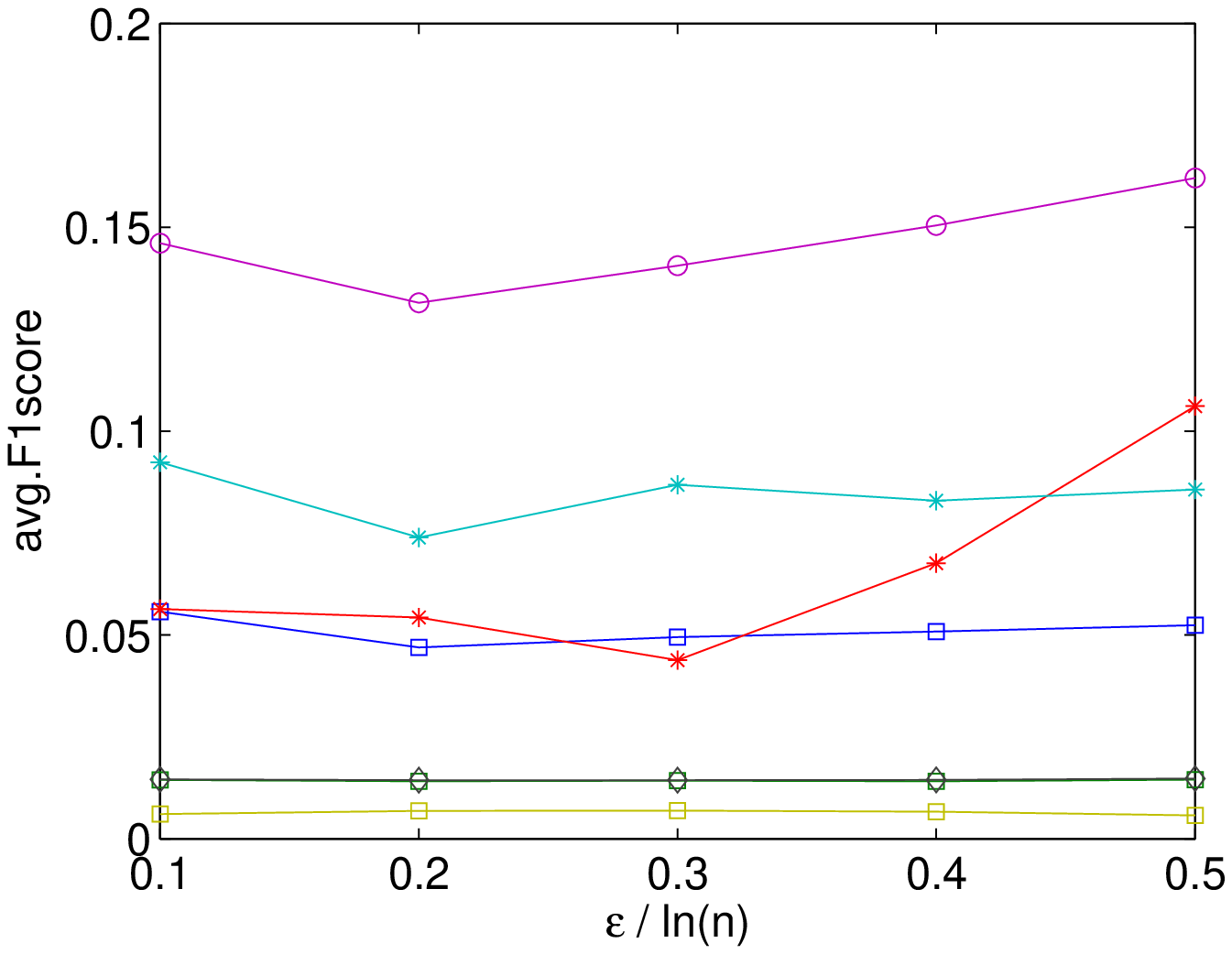}
                 \caption{}
                 \label{fig:all-youtube-f1}
         \end{subfigure}
         \hfill         
         \begin{subfigure}[t]{0.32\textwidth}
                  \centering
                  \includegraphics[height=1.6in]{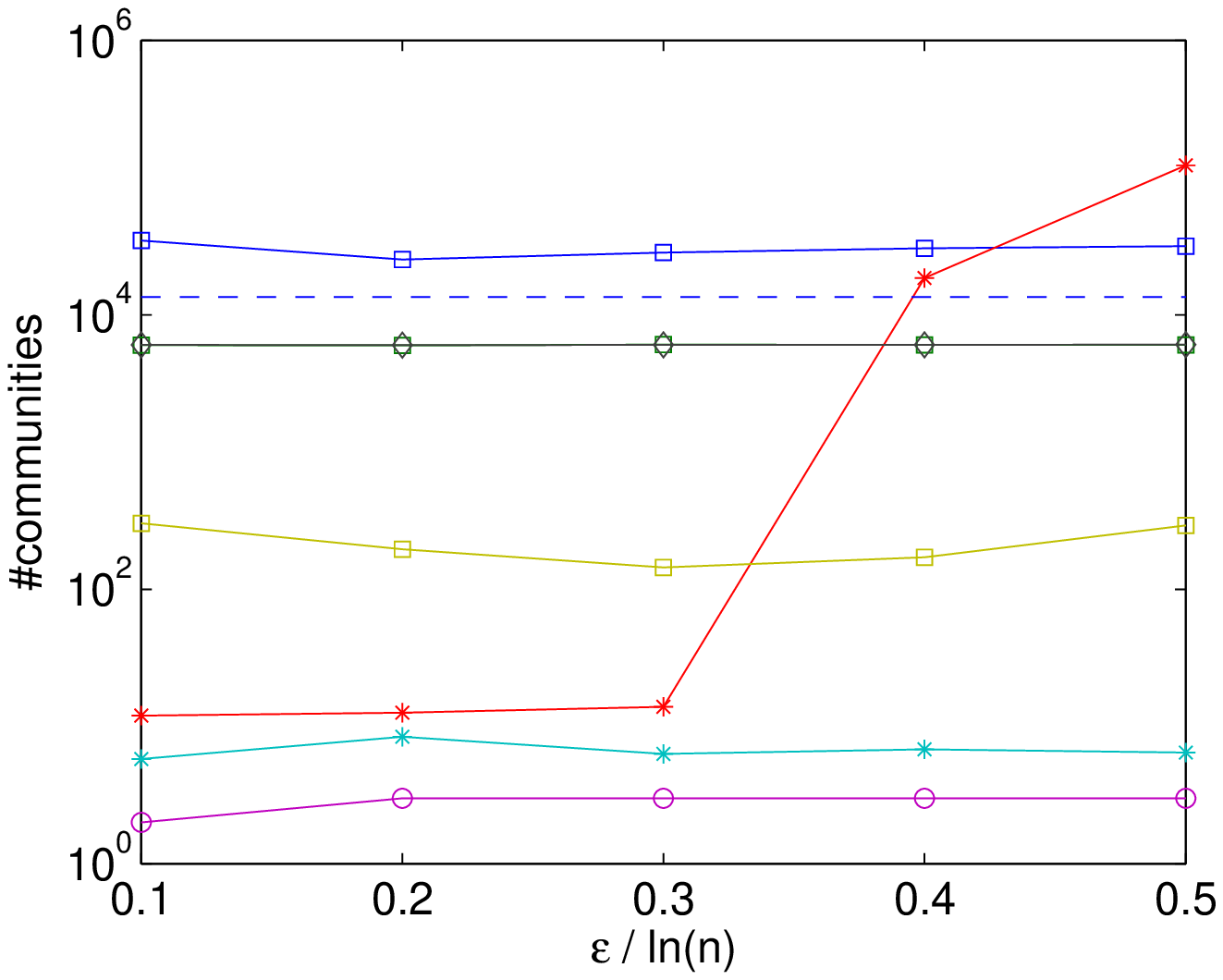}
                 \caption{}
                  \label{fig:all-youtube-com}
          \end{subfigure}        	
     \caption{Quality metrics and the number of communities (\texttt{youtube}) (0.5$\ln n$ = 7.0)}
     \label{fig:all-youtube}
 \end{figure*}



%

\bibliographystyle{abbrv}
\bibliography{community-dp-short}

\end{document}